\DeclareMathOperator{\E}{\mathbbmss{E}}
\newtheorem{theorem}{Theorem}
\begin{document}
\title{On the Optimality of Treating Interference as Noise for Interfering Multiple Access Channels
\footnotetext{This work is partially supported by the U.K. Engineering and Physical Sciences Research Council (EPSRC) under grant EP/N015312/1.}}
\author{Hamdi~Joudeh and Bruno~Clerckx \\
\small
Communications and Signal Processing Group, Department of Electrical and Electronic Engineering \\
\small
Imperial College London, London SW7 2AZ, United Kingdom\\
\small
 Email: \{hamdi.joudeh10, b.clerckx\}@imperial.ac.uk}
\date{}
\maketitle
\begin{abstract}
In this paper, we look at the problem of treating interference as noise (TIN) in the Gaussian interfering multiple access channel (IMAC).
The considered network comprises $K$ mutually interfering multiple access channels (MACs), each consisting of two
transmitters communicating independent messages to one receiver.
We define the TIN scheme for this channel as one in which each MAC performs a power controlled version of its capacity-achieving strategy while treating interference
from all other MACs as noise.
We characterize an achievable generalized degrees-of-freedom (GDoF) region under
the TIN scheme and identify a regime of parameters (in terms of channel strength levels) where this region is optimal.
\end{abstract}
\section{Introduction}
\label{sec:introduction}
Transmitter power control coupled with treating interference as noise (TIN) at receivers
is one of the oldest and most commonly employed interference management strategies in wireless networks.
The TIN strategy derives its attractiveness from its (relatively) low complexity and its
robustness to channel uncertainty.
TIN was shown to achieve the sum-capacity of the $2$-user interference channel (IC)
in what is known as the noisy interference regime \cite{Annapureddy2009,Shang2009,Motahari2009}.
For the $K$-user IC, the problem is much more involved largely due to the intricate structure of the TIN-achievable
rate region \cite{Charafeddine2012} and the difficulty of the
underlying optimization problem \cite{Luo2008},
a surprising contrast to the simple structure of the TIN strategy itself.
This challenge was circumvented by Geng \emph{et al.} in \cite{Geng2015} through seeking an approximate solution based on the
generalized degrees-of-freedom (GDoF) \cite{Etkin2008}.

Geng \emph{et al.} identified a broad regime, described in terms of channel strength levels, where the TIN strategy achieves the exact GDoF region
and the entire capacity region  within a constant gap.
Beyond the regular $K$-user IC considered in \cite{Geng2015}, this type of TIN-optimality investigation, through the GDoF and capacity approximations,
has been extended in several directions
\cite{Geng2015a,Geng2016,Sun2016,Yi2016}.
Nevertheless, the optimality of TIN in cellular-like networks is an intriguing direction that remains meagerly investigated.
A recent result in this direction was reported in \cite{Gherekhloo2016}, where an alteration of the 2-user IC, termed the PIMAC, was considered.
The PIMAC consists of a point-to-point link and a 2-user multiple access channel (MAC) that interfere with each other.
The authors in \cite{Gherekhloo2016} identify regimes in which a simple time-sharing-TIN scheme is sum-GDoF optimal and achieves the
sum-capacity within a constant gap.
However, the specificity of the results and analysis in \cite{Gherekhloo2016} makes them difficult to generalize to settings
with more transmitters and receivers.

In this work, we consider an interfering multiple access channel (IMAC)
comprising $K$ mutually interfering 2-user MACs, e.g. Fig. \ref{fig:IMAC_example}.
This is a typical model for cellular networks operating in the uplink mode and subsumes
the setting in \cite{Gherekhloo2016}\footnote{Note that we do not claim that our
results subsume the ones in \cite{Gherekhloo2016}. In contrast to the entire GDoF region considered here,
the investigation in  \cite{Gherekhloo2016} is restricted to the sum-GDoF. This bears the possibility of
arriving at an enlarged TIN-optimality regime.}.
We introduce a TIN scheme in which each MAC performs a power controlled
version of its capacity-achieving strategy, while treating interference from all other MACs as noise.
We characterize an achievable GDoF region under the proposed TIN scheme.
Moreover, we identify a regime of channel parameters for which this region is optimal.
Finally, for the identified TIN-optimal regime, we show that the propose TIN scheme
achieves the entire capacity region to within a
constant gap.

\emph{Notation:}
For any positive integers $z_{1}$ and $z_{2}$ where $z_{1} \leq z_{2}$, the sets $\{1,2,\ldots,z_{1}\}$ and $\{z_{1},z_{1}+1,\ldots,z_{2}\}$ are denoted by
$\langle z_{1} \rangle$ and $\langle z_{1}:z_{2}\rangle$, respectively.
For any $a \in \mathbb{R}$, $(a)^{+} = \max\{0,a\}$.
Bold lowercase symbols  denote tuples, e.g. $\mathbf{a} = (a_{1},\ldots,a_{Z})$.
For $\mathcal{A} = \{\mathbf{a}_{1},\ldots,\mathbf{a}_{K}\}$, $\Sigma(\mathcal{A})$
is the set of all cyclicly ordered sequences of all subsets of $\mathcal{A}$,
e.g.
\begin{equation}
\nonumber
\Sigma\big(\{\mathbf{a}_{1},\mathbf{a}_{2},\mathbf{a}_{3}\}\big) =
\big\{(\mathbf{a}_{1}),(\mathbf{a}_{2}),(\mathbf{a}_{3}),(\mathbf{a}_{1},\mathbf{a}_{2}),(\mathbf{a}_{1},\mathbf{a}_{3}),(\mathbf{a}_{2},\mathbf{a}_{3}),
(\mathbf{a}_{1},\mathbf{a}_{2},\mathbf{a}_{3}),(\mathbf{a}_{1},\mathbf{a}_{3},\mathbf{a}_{2})  \big\}.
\end{equation}
\section{System Model and Preliminaries}
\label{sec:system model}
\begin{figure}
\centering
\includegraphics[width = 0.4\textwidth,trim={9cm 6cm 9cm 6cm},clip]{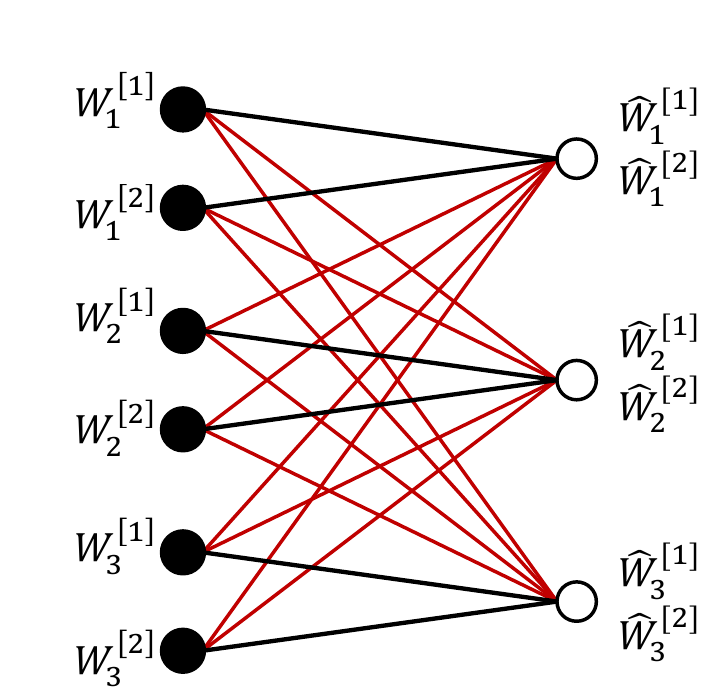}
\caption{A 3-cell interfering multiple access channel.}
\label{fig:IMAC_example}
\end{figure}
Consider a $K$-receiver Gaussian  IMAC in which each receiver is associated with $2$ transmitters.
The $k$-th receiver is denoted by Rx-$k$ and the $l_{k}$-th transmitter, $l_{k} \in \{1,2\}$, associated with this receiver is denoted by
Tx-$(l_{k},k)$.
We often use the terminology of cellular networks where a receiver and its associated transmitters are referred to as a cell.
The set of tuples corresponding to all transmitters (or users) in the network is given by
$\mathcal{K} \triangleq \left\{(l_{k},k) : l_{k} \in \{1,2\}, k \in \langle K \rangle \right\}$.

The input-output relationship at the $t$-th use of the channel is described as
\begin{equation}
\label{eq:system model}
Y_{i}(t) = \sum_{k = 1}^{K}  \Bigl[ h_{ki}^{[1]} \tilde{X}_{k}^{[1]}(t) + h_{ki}^{[2]} \tilde{X}_{k}^{[2]}(t) \Bigr] + Z_{i}(t),
\forall i \in \langle K \rangle
\end{equation}
where $h_{ki}^{[l_{k}]}$ is the channel coefficient from Tx-$(l_{k},k)$ to Rx-$i$,
$\tilde{X}_{k}^{[l_{k}]}(t)$ is the transmitted symbol of Tx-$(l_{k},k)$ and  $Z_{i}(t) \sim \mathcal{N}_{\mathbb{C}}(0,1)$ is the normalized additive white Gaussian noise (AWGN) at Rx-$i$.
All symbols are complex and each transmitter $(l_{k},k)$ is subject to the power constraint
$\E \big[|\tilde{X}_{k}^{[l_{k}]}(t)|^{2}\big] \leq P_{k}^{[l_{k}]}$.
Note that receivers are indexed by subscripts, transmitters are indexed by superscripts in square parentheses and channel
uses are indexed by arguments in round parentheses.

Following the standard reformulation in \cite{Geng2015}, the channel model in \eqref{eq:system model} is translated into
\begin{equation}
\label{eq:system model 2}
Y_{i}(t) = \sum_{k = 1}^{K} \Bigl[ \sqrt{ P^{\alpha_{ki}^{[1]}} } e^{j \theta_{ki}^{[1]}} X_{k}^{[1]}(t)
+\sqrt{ P^{\alpha_{ki}^{[2]}} } e^{j \theta_{ki}^{[2]}} X_{k}^{[2]}(t) \Bigr]
+ Z_{i}(t)
\end{equation}
where $P>0$ is a nominal power value and $X_{k}^{[l_{k}]}(t) = {\tilde{X}_{k}^{[l_{k}]}(t)}/{\sqrt{P_{k}^{[l_{k}]}}}$ is the normalized transmit symbol of Tx-$(l_{k},k)$ with power constraint $\E \big[|X_{k}^{[l_{k}]}(t) |^{2}\big] \leq 1$.
$\sqrt{ P^{\alpha_{ki}^{[l_{k}]}} }$ and $\theta_{ki}^{[l_{k}]}$ are the magnitude and phase of the channel between
 Tx-$(l_{k},k)$  and Rx-$i$, respectively. The exponent $\alpha_{ki}^{[l_{k}]}$ is known as the channel strength level, and is given by
\begin{equation}
\nonumber
\alpha_{ki}^{[l_{k}]} \triangleq \frac{\log \left( \max \big\{ 1, | h_{ki}^{[l_{k}]} |^{2} P_{k}^{[l_{k}]} \big\} \right) }{\log P}
, \ \forall (l_{k},k) \in \mathcal{K}, \
i\in \langle K \rangle.
\end{equation}
As in \cite{Geng2015}, avoiding negative channel strength levels has no impact on the results.
Without loss of generality, we assume the following order of direct link strength levels
\begin{equation}
\label{eq:strength_order}
\alpha_{kk}^{[1]} \leq  \alpha_{kk}^{[2]}, \ \forall k \in \langle K \rangle.
\end{equation}
\subsection{Messages, Rates, Capacity and GDoF}
Tx-$(1,k)$ and Tx-$(2,k)$ have the messages $W_{k}^{[1]}$ and $W_{k}^{[2]}$, respectively,
intended to Rx-$k$.
All messages are independent and $|W_{k}^{[l_{k}]}|$ denotes the size of the corresponding message set.
For codewords spanning $n$ channel uses, the rates $R_{k}^{[l_{k}]} = \frac{\log |W_{k}^{[l_{k}]}|}{n}$, $\forall (l_{k},k) \in \mathcal{K}$,
are achievable if  all messages can be decoded simultaneously with arbitrarily small error probability as $n$ grows sufficiently large.
A rate tuple is denoted by
$\mathbf{R} = \big(R_{1}^{[1]},R_{1}^{[2]},\ldots,R_{K}^{[1]},R_{K}^{[2]}\big)$
and the channel capacity region $\mathcal{C}$ is the closure of the set of all achievable rate tuples.
A GDoF tuple is denoted by $\mathbf{d} = \big(d_{1}^{[1]},d_{1}^{[2]},\ldots,d_{K}^{[1]},d_{K}^{[2]}\big)$
and the GDoF region is defined as
\begin{equation}
\nonumber
\mathcal{D} \triangleq \left\{ \mathbf{d} : \  d_{k}^{[l_{k}]} = \lim_{P \rightarrow \infty} \frac{R_{k}^{[l_{k}]}}{\log P}, \
\forall (l_{k},k) \in \mathcal{K}, \ \mathbf{R} \in \mathcal{C}  \right\}.
\end{equation}
\subsection{Treating (Inter-cell) Interference as Noise}
\label{subsec:TIN_scheme}
In the TIN scheme, a MAC-type capacity-achieving strategy is employed in each cell, with successive decoding of in-cell signals while treating all inter-cell interference as noise.
However, one key difference compared to the MAC (i.e. single-cell transmission) is that power control is employed by transmitters to manage inter-cell interference.
It is known that such power control is not required to achieve the corner points of the MAC capacity reagin \cite{Cover2012}.

Each transmitter Tx-$(l_{k},k)$ uses an independent Gaussian codebook and transmits with power $P^{r_{k}^{[l_{k}]}}$,
where $r_{k}^{[l_{k}]} \leq 0$ is the transmit power exponent.
On the other hand, each receiver Rx-$k$ performs successive decoding of its two desired signals, while treating all inter-cell interference as noise.
For a decoding order $\pi_{k}:\{1,2\} \rightarrow \{1,2\}$,
Rx-$k$ starts by decoding, and cancelling, $X_{k}^{[\pi_{k}(2)]}$ before decoding $X_{k}^{[\pi_{k}(1)]}$.
Hence, Tx-$\big(\pi_{k}(1),k\big)$ and Tx-$\big(\pi_{k}(2),k\big)$ achieve any rates $R_{k}^{[\pi_{k}(1)]}$ and $R_{k}^{[\pi_{k}(1)]}$, respectively,  that satisfy
\begin{align}
\label{eq:rate per user_1}
R_{k}^{[\pi_{k}(1)]} & \leq \log \Biggl( 1+ \frac{ P^{ r_{k}^{[\pi_{k}(1)]} +  \alpha_{kk}^{[\pi_{k}(1)]} }  }
{1  +
\sum_{j\neq k} \big[ P^{ r_{j}^{[1]} + \alpha_{jk}^{[1]} }+ P^{ r_{j}^{[2]} + \alpha_{jk}^{[2]} }  \big]  } \Biggr)  \\
\label{eq:rate per user_2}
R_{k}^{[\pi_{k}(2)]} & \leq \log \Biggl( 1+ \frac{ P^{ r_{k}^{[\pi_{k}(2)]} + \alpha_{kk}^{[\pi_{k}(2)]}  }  }
{1  + P^{ r_{k}^{[\pi_{k}(1)]} + \alpha_{kk}^{[\pi_{k}(1)]} }  +
\sum_{j\neq k} \big[ P^{ r_{j}^{[1]} + \alpha_{jk}^{[1]} }+ P^{ r_{j}^{[2]} + \alpha_{jk}^{[2]} }  \big]  } \Biggr).
\end{align}
In the GDoF sense, we have
\begin{align}
\label{eq:GDoF per user_1}
d_{k}^{[\pi_{k}(1)]}  &\leq \max \biggl\{ 0, r_{k}^{[\pi_{k}(1)]} + \alpha_{kk}^{[\pi_{k}(1)]}
 -  \Bigl(\max_{j \neq k} \bigl\{  \max_{l_{j}} \{ r_{j}^{[l_{j}]} + \alpha_{jk}^{[l_{j}]} \}  \bigr\} \Bigr)^{+}  \biggr\} \\
\label{eq:GDoF per user_2}
d_{k}^{[\pi_{k}(2)]}  &\leq \max \biggl\{  0, r_{k}^{[\pi_{k}(2)]}  +   \alpha_{kk}^{[\pi_{k}(2)]}
  -  \Bigl(\max \Bigl\{r_{k}^{[\pi_{k}(1)]}  +  \alpha_{kk}^{[\pi_{k}(1)]}  ,\max_{j \neq k} \bigl\{
  \max_{l_{j}} \{ r_{j}^{[l_{j}]}  +  \alpha_{jk}^{[l_{j}]} \}  \bigr\} \Bigr\} \Bigr)^{+} \biggr\}.
\end{align}

The decoding order across the network is defined as
$\bm{\pi} \triangleq \left( \pi_{1},\ldots,  \pi_{K} \right)$.
For a given $\bm{\pi}$, the \emph{TIN-achievable GDoF region}, denoted by $\mathcal{P}_{\bm{\pi}}^{\star}$,
is the set of all GDoF tuples $\mathbf{d}$ for which there exists a feasible
transmit power exponent tuple $\mathbf{r} \triangleq \big(r_{1}^{[1]},r_{1}^{[2]},\ldots,r_{K}^{[1]},r_{K}^{[2]}\big)$
such that \eqref{eq:GDoF per user_1} and \eqref{eq:GDoF per user_2} are satisfied for all
$k \in \langle K \rangle$.
The \emph{general TIN-achievable GDoF region} is defined as $\mathcal{P}^{\star} \triangleq \bigcup_{\bm{\pi}} \mathcal{P}_{\bm{\pi}}^{\star}$.
Note that any GDoF tuple in $\mathcal{P}^{\star}$ is achieved through a strategy identified by a decoding order and a power allocation, i.e. $(\bm{\pi},\mathbf{r})$, where no time-sharing between different strategies is allowed.

Similar to \cite{Geng2015}, we introduce a \emph{polyhedral TIN scheme}.
For a given $\bm{\pi}$, the corresponding \emph{polyhedral TIN-achievable GDoF region} $\mathcal{P}_{\bm{\pi}}$
is
described by all GDoF tuples that satisfy
\begin{align}
\label{eq:polyhedral_region_1}
r_{k}^{[\pi_{k}(l_{k})]} & \leq 0,  \  \forall (l_{k},k) \in \mathcal{K}  \\
\label{eq:polyhedral_region_2}
d_{k}^{[\pi_{k}(l_{k})]} & \geq 0,  \  \forall (l_{k},k) \in \mathcal{K}  \\
\label{eq:polyhedral_region_3}
d_{k}^{[\pi_{k}(1)]}  &\leq r_{k}^{[\pi_{k}(1)]} + \alpha_{kk}^{[\pi_{k}(1)]}
 -  \Bigl(\max_{j \neq k} \bigl\{  \max_{l_{j}} \{ r_{j}^{[l_{j}]} + \alpha_{jk}^{[l_{j}]} \}  \bigr\} \Bigr)^{+},
 \forall k \in \langle K \rangle \\
  \label{eq:polyhedral_region_4}
d_{k}^{[\pi_{k}(2)]}  &\leq r_{k}^{[\pi_{k}(2)]}  +   \alpha_{kk}^{[\pi_{k}(2)]}
  -   \Bigl(\max \Bigl\{r_{k}^{[\pi_{k}(1)]}  +  \alpha_{kk}^{[\pi_{k}(1)]}  ,\max_{j \neq k} \bigl\{
  \max_{l_{j}} \{ r_{j}^{[l_{j}]}  +  \alpha_{jk}^{[l_{j}]} \}  \bigr\} \Bigr\} \Bigr)^{+},
  \forall k \in \langle K \rangle
\end{align}
where the first $\max\{0,\cdot\}$ in \eqref{eq:GDoF per user_1} and \eqref{eq:GDoF per user_2}  has been dropped.
It follows from this restriction that $\mathcal{P}_{\bm{\pi}} \subseteq \mathcal{P}_{\bm{\pi}}^{\star}$.
Taking the union over all possible decoded orders, we achieve the region given by $\mathcal{P} = \bigcup_{\bm{\pi}} \mathcal{P}_{\bm{\pi}}$.
It is readily seen that $\mathcal{P} \subseteq \mathcal{P}^{\star} \subseteq \mathcal{D}$.

As it turns out, for any $\bm{\pi}$, the region $\mathcal{P}_{\bm{\pi}}$ is a polyhedron (see Theorem \ref{theorem:TIN_region} in the following section). However, $\mathcal{P}$ is not a polyhedron in general, since it is a union of multiple polyhedra.
Yet, every GDoF point in $\mathcal{P}$ is achieved by fixing
$\bm{\pi}$ and applying a polyhedral TIN scheme with power allocation $\mathbf{r}$ satisfying \eqref{eq:polyhedral_region_1}-\eqref{eq:polyhedral_region_4}.

In the following, we often work with the identity order $\bm{\pi} = \bm{\mathrm{id}}$, where
$\bm{\mathrm{id}} \triangleq \left( \mathrm{id}_{1},\ldots,\mathrm{id}_{K} \right)$ and $\mathrm{id}_{i}(l_{i}) = l_{i}$,
$\forall(l_{i},i) \in \mathcal{K}$.
The corresponding polyhedral TIN region is denoted by $\mathcal{P}_{\bm{\mathrm{id}}}$.
\subsection{Some Known Special Cases}
\label{subsec:special_cases}
Before presenting the main results, we review some known GDoF region characterizations for subnetworks of the considered IMAC.
First, we consider a regular IC obtained by removing one transmitter from
each cell and leaving only Tx-$(l_{i}, i)$, $i \in \langle K \rangle$.
From \cite{Geng2015}, the polyhedral TIN region for this subnetwork is given by
\begin{align}
\label{eq:IC_GDoF_region_1}
0 \leq  d_{i}^{[l_{i}]} & \leq \alpha_{ii}^{[l_{i}]}, \ \forall i \in \langle K \rangle\\
\label{eq:IC_GDoF_region_2}
\sum_{j \in \langle m \rangle} d_{i_{j}}^{[l_{i_{j}}]} & \leq
\sum_{j \in \langle m \rangle} \alpha_{i_{j}i_{j}}^{[l_{i_{j}}]} - \alpha_{i_{j}i_{j-1}}^{[l_{i_{j}}]},
\
(i_{1},\ldots,i_{m}) \in \Sigma \big(\langle K \rangle \big), m \in \langle 2:K \rangle
\end{align}
where the set of cyclic sequences $\Sigma \big(\langle K \rangle \big)$ is defined in the notation part of Section \ref{sec:introduction}
and a modulo operation is implicitly used on receiver indices when dealing with cyclic sequences, e.g. $i_{0} = i_{m}$.
The region in \eqref{eq:IC_GDoF_region_1}--\eqref{eq:IC_GDoF_region_2} is optimal for the regular IC under the TIN-optimality conditions in \cite{Geng2015}.

Next, consider the MAC consisting of Rx-$i$ and its
transmitters Tx-$(1,i)$ and Tx-$(2,i)$. The GDoF region achieved while fixing the decoding order $\pi_{i}$ is given by
\begin{align}
\label{eq:MAC_GDoF_region_1}
d_{i}^{[l_{i}]} & \geq 0, \ \forall l_{i} \in \{1,2\}\\
\label{eq:MAC_GDoF_region_2}
\sum_{s_{i} \in \langle l_{i} \rangle} d_{i}^{[\pi_{i}(s_{i})]} & \leq \alpha_{ii}^{[\pi_{i}(l_{i})]}, \ \forall l_{i} \in \{1,2\}.
\end{align}
It can be easily checked that the optimal GDoF region of the considered MAC is given by \eqref{eq:MAC_GDoF_region_1}--\eqref{eq:MAC_GDoF_region_2}
while fixing the decoding order to $\pi_{i} = \mathrm{id}$.
The signal of the \emph{stronger} user, i.e. Tx-$(2,i)$, is always received at a higher power level and is hence decoded first, making the other order
redundant from a GDoF perspective. Note that
this is in contrast to the MAC capacity region, which requires changing
the successive decoding order to achieve different corner points in general \cite{Cover2012}.
\section{Main Results}
\label{sec:main_results}
Here we present the main results of the paper.
\begin{theorem}
\label{theorem:TIN_region}
For the IMAC described in Section \ref{sec:system model},
the polyhedral TIN-achievable GDoF region $\mathcal{P}_{\bm{\pi}}$, for any decoding order $\bm{\pi}$,
is given by all tuples $\mathbf{d}$ that satisfy
\begin{align}
\label{eq:IMAC_TIN_region_1}
d_{i}^{[l_{i}]}   & \geq 0, \ \forall (l_{i},i) \in \mathcal{K} \\
\label{eq:IMAC_TIN_region_2}
\sum_{s_{i} \in \langle l_{i} \rangle} d_{i}^{[\pi_{i}(s_{i})]} & \leq \alpha_{ii}^{[\pi_{i}(l_{i})]}, \ \forall (l_{i},i) \in \mathcal{K} \\
\nonumber
\sum_{j \in \langle m \rangle }  \! \sum_{s_{i_{j}} \! \in \! \langle l_{i_{j}} \rangle}
\! \! \! d_{i_{j}}^{[\pi_{i_{j}}(s_{i_{j}})]} &  \! \leq \! \! \! \!
\sum_{j \in \langle m \rangle } \alpha_{i_{j}i_{j}}^{[\pi_{i_{j}}(l_{i_{j}})]} \! - \alpha_{i_{j}i_{j-1}}^{[\pi_{i_{j}}(l_{i_{j}})]}, \\
\label{eq:IMAC_TIN_region_3}
 \forall l_{i_{j}} \in \{1,2\}, (i_{1},\ldots,&i_{m})   \in \Sigma(\langle K \rangle), m \in \langle 2:K \rangle.
\end{align}
\end{theorem}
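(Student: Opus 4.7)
The plan is to establish the equality of $\mathcal{P}_{\bm{\pi}}$, defined as the $\mathbf{d}$-projection of the joint linear system (\ref{eq:polyhedral_region_1})--(\ref{eq:polyhedral_region_4}) in $(\mathbf{d},\mathbf{r})$, with the explicit polyhedron $\mathcal{R}_{\bm{\pi}}$ given by (\ref{eq:IMAC_TIN_region_1})--(\ref{eq:IMAC_TIN_region_3}), by proving the two inclusions separately. Morally this is a Fourier--Motzkin elimination of the $r_k^{[l_k]}$'s, but splitting it this way avoids a blind case enumeration over the piecewise-linear $(\cdot)^+$ and $\max$ operators in (\ref{eq:polyhedral_region_3})--(\ref{eq:polyhedral_region_4}).

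For $\mathcal{P}_{\bm{\pi}}\subseteq\mathcal{R}_{\bm{\pi}}$: non-negativity (\ref{eq:IMAC_TIN_region_1}) is immediate, and the remaining bounds will all follow from one unified cell-aggregate inequality,
\begin{equation*}
\sum_{s_k=1}^{l_k} d_k^{[\pi_k(s_k)]} \leq r_k^{[\pi_k(l_k)]} + \alpha_{kk}^{[\pi_k(l_k)]} - (\sigma_k)^+, \quad l_k\in\{1,2\}, \quad \sigma_k \triangleq \max_{j\neq k}\max_{l_j}\{r_j^{[l_j]}+\alpha_{jk}^{[l_j]}\}.
\end{equation*}
For $l_k=1$ this is (\ref{eq:polyhedral_region_3}) verbatim; for $l_k=2$ I would add (\ref{eq:polyhedral_region_3}) and (\ref{eq:polyhedral_region_4}) and case-split on the inner $\max$ in (\ref{eq:polyhedral_region_4}) according to whether the in-cell term $r_k^{[\pi_k(1)]}+\alpha_{kk}^{[\pi_k(1)]}$, the out-of-cell $\sigma_k$, or zero dominates. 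The MAC-type bounds (\ref{eq:IMAC_TIN_region_2}) then follow from the aggregate inequality together with $r\leq 0$ and $(\sigma_k)^+\geq 0$. For the cyclic bounds (\ref{eq:IMAC_TIN_region_3}), I would lower bound each $(\sigma_{i_j})^+$ by the single neighbor contribution $r_{i_{j+1}}^{[\pi_{i_{j+1}}(l_{i_{j+1}})]}+\alpha_{i_{j+1}i_j}^{[\pi_{i_{j+1}}(l_{i_{j+1}})]}$ and sum around the cycle: the $r$-terms telescope to zero and a reindexing of the residual $\alpha$-terms (using closure of $\Sigma(\langle K\rangle)$ under cyclic reversal) produces exactly the right-hand side of (\ref{eq:IMAC_TIN_region_3}).

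For $\mathcal{R}_{\bm{\pi}}\subseteq\mathcal{P}_{\bm{\pi}}$: given $\mathbf{d}\in\mathcal{R}_{\bm{\pi}}$, I would construct $\mathbf{r}$ via a potential-theoretic argument mirroring the one used in \cite{Geng2015} for the $K$-user IC. Introduce scalar potentials $(\phi_k)_{k\in\langle K\rangle}$ and set $r_k^{[\pi_k(l_k)]} = \phi_k + \sum_{s=1}^{l_k}d_k^{[\pi_k(s)]} - \alpha_{kk}^{[\pi_k(l_k)]}$. Direct substitution shows that (\ref{eq:polyhedral_region_1}) becomes $\phi_k \leq \min_{l_k}\{\alpha_{kk}^{[\pi_k(l_k)]} - \sum_{s\leq l_k}d_k^{[\pi_k(s)]}\}$, whose right-hand side is non-negative by (\ref{eq:IMAC_TIN_region_2}); (\ref{eq:polyhedral_region_3}) becomes $\phi_k \geq (\sigma_k)^+$ with $\sigma_k$ now an affine functional of the $\phi_j$'s; and (\ref{eq:polyhedral_region_4}) reduces to the same condition after observing, via $d_k^{[\pi_k(1)]}\geq 0$, that the inner $\max$ is realized by the in-cell term. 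Rearranging yields a system of distance inequalities $\phi_j-\phi_k \leq \alpha_{jj}^{[\pi_j(l_j)]}-\alpha_{jk}^{[\pi_j(l_j)]}-\sum_{s\leq l_j}d_j^{[\pi_j(s)]}$ together with a non-negative box constraint. By the classical no-negative-cycle/potential theorem, the distance system is feasible iff the associated weighted digraph on $\langle K\rangle$ has no negative cycle, a condition that is precisely (\ref{eq:IMAC_TIN_region_3}) (in the reverse cyclic orientation, which is available since $\Sigma(\langle K\rangle)$ is closed under reversal). A final additive shift places $(\phi_k)$ in the required box.

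The main obstacle is this reverse inclusion: specifically, verifying that the nonlinear condition (\ref{eq:polyhedral_region_4}), with its extra intra-cell $\max$ term, does not produce a genuinely new family of cuts beyond those already captured by (\ref{eq:polyhedral_region_3}). This is the step where the IMAC departs most sharply from the $K$-user IC of \cite{Geng2015}, and it is what makes the per-cell level index $l_{i_j}$ appear inside the cyclic bound (\ref{eq:IMAC_TIN_region_3}); the derivation in the forward direction must be reconciled with this extra degree of freedom in the level choice, and the potential-theoretic construction must be set up carefully so that (\ref{eq:polyhedral_region_4}) collapses to the same distance inequality as (\ref{eq:polyhedral_region_3}).
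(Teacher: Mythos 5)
Your proposal is correct, but it is organized quite differently from the paper's proof, and in a way that is worth comparing. The paper establishes both inclusions simultaneously: it maps the feasibility system \eqref{eq:polyhedral_region_1}--\eqref{eq:polyhedral_region_4} onto a potential graph with $2K+1$ vertices (one vertex \emph{per user} plus a ground vertex, with potentials equal to the individual exponents $r_k^{[l_k]}$), invokes the potential theorem, and then spends most of the effort on a combinatorial redundancy analysis of the resulting circuits (classes C.1--C.3 in Section \ref{sec:TIN_region}) to whittle the circuit inequalities down to \eqref{eq:IMAC_TIN_region_2}--\eqref{eq:IMAC_TIN_region_3}. You instead prove the forward inclusion by hand via the cell-aggregate inequality and a telescoping sum around the cycle (this part checks out: adding \eqref{eq:polyhedral_region_3} and \eqref{eq:polyhedral_region_4} and using $(x)^{+}\geq x$ gives the aggregate bound without even needing the case split you describe), and you reserve the potential theorem for the reverse inclusion on a \emph{reduced} graph with only $K+1$ vertices, one potential $\phi_k$ per cell, by hard-wiring the intra-cell structure into the ansatz $r_k^{[\pi_k(l_k)]}=\phi_k+\sum_{s\leq l_k}d_k^{[\pi_k(s)]}-\alpha_{kk}^{[\pi_k(l_k)]}$. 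This is a legitimate and arguably cleaner decomposition: the ansatz makes \eqref{eq:polyhedral_region_4} collapse exactly as you predict (since $\phi_k\geq(\sigma_k)^{+}\geq 0$ and $d_k^{[\pi_k(1)]}\geq 0$ force the inner $\max$ to be realized by the in-cell term), so the ``main obstacle'' you flag at the end is in fact already resolved by your own construction, and the entire redundancy analysis of classes C.1--C.3 is avoided because the per-cell graph has no spurious circuits. What the paper's larger graph buys in exchange is a single if-and-only-if chain (no separate forward argument) and explicit power exponents for every user, which are reused later in the constant-gap analysis.

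One point to tighten: the closing remark that ``a final additive shift places $(\phi_k)$ in the required box'' is not right as stated, since a common shift cannot simultaneously satisfy $0\leq\phi_k\leq m_k$ for different $m_k$. The correct move, which the paper makes explicitly via the ground vertex $u$, is to encode the box constraints as edges to and from a ground node and check that the mixed cycles through the ground node yield only bounds already implied by \eqref{eq:IMAC_TIN_region_2} and \eqref{eq:IMAC_TIN_region_3}; this verification is short (choose the level index at the entry cell to achieve the minimum defining $m_{i_1}$) but it is a genuine step, not a shift. Also, the reindexing of the $\alpha$-terms in your cyclic sum is a plain index shift $j\mapsto j-1$, not an appeal to closure of $\Sigma(\langle K\rangle)$ under reversal.
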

In  \eqref{eq:IMAC_TIN_region_3}, a modulo operation is implicitly used on receiver indices, e.g. $i_{0} = i_{m}$.
The proof of Theorem \ref{theorem:TIN_region} is presented in Section \ref{sec:TIN_region}.
It can be seen that the characterization of $\mathcal{P}_{\bm{\pi}}$ in Theorem \ref{theorem:TIN_region} inherits the features of
both the IC and MAC characterizations presented  in Section \ref{subsec:special_cases}.
Moreover, in contrast to the MAC, the decoding order $\bm{\mathrm{id}}$
does not necessarily yield the largest polyhedral region for the IMAC, i.e.
$\mathcal{P}_{\bm{\pi}} \subseteq \mathcal{P}_{\bm{\mathrm{id}}}$ does not hold in general for all  $\bm{\pi}$.
This inclusion, however, holds under the TIN-optimality conditions presented in the following result.
\begin{theorem}
\label{theorem:TIN_optimality}
For the IMAC described in Section \ref{sec:system model}, if the following conditions are satisfied
\begin{align}
\label{eq:TIN_condition_1}
\alpha_{ii}^{[l_{i}]} & \geq \max_{j:j \neq i} \left\{\alpha_{ij}^{[l_{i}]} \right\} +
\max_{(l_{k},k):k \neq i} \left\{\alpha_{ki}^{[l_{k}]} \right\}, \ \forall(l_{i},i),(l_{k},k) \in \mathcal{K}, \; j \in \langle K\rangle  \\
\label{eq:TIN_condition_2}
\alpha_{ii}^{[2]} - \alpha_{ij}^{[2]} & \geq \alpha_{ii}^{[1]} - \alpha_{ij}^{[1]} + \min\left\{\alpha_{ij}^{[1]},\alpha_{ij}^{[2]} \right\}, \
\forall i,j \in \langle K \rangle, \; i \neq j,
\end{align}
then the optimal GDoF region is given by $\mathcal{P}_{\bm{\mathrm{id}}}$, achieved through the polyhedral
TIN scheme in Section \ref{subsec:TIN_scheme}, and described by \eqref{eq:IMAC_TIN_region_1}--\eqref{eq:IMAC_TIN_region_3}
while setting $\bm{\pi} = \bm{\mathrm{id}}$.
\end{theorem}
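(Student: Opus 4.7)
The plan is to establish Theorem \ref{theorem:TIN_optimality} by proving the two inclusions $\mathcal{P}_{\bm{\mathrm{id}}} \subseteq \mathcal{D}$ and $\mathcal{D} \subseteq \mathcal{P}_{\bm{\mathrm{id}}}$. Achievability is immediate from Theorem \ref{theorem:TIN_region} applied with $\bm{\pi} = \bm{\mathrm{id}}$, so all the work lies in the converse. The converse amounts to producing three families of outer bounds matching \eqref{eq:IMAC_TIN_region_1}--\eqref{eq:IMAC_TIN_region_3} specialized to $\bm{\pi} = \bm{\mathrm{id}}$: the trivial non-negativity constraints, the in-cell MAC bounds, and the cyclic sum-GDoF bounds.

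For the in-cell bounds $d_i^{[1]} \leq \alpha_{ii}^{[1]}$ and $d_i^{[1]} + d_i^{[2]} \leq \alpha_{ii}^{[2]}$, I would enhance Rx-$i$ with a genie that reveals every transmit codeword outside cell $i$; the resulting channel is an isolated 2-user Gaussian MAC whose GDoF region, using $\alpha_{ii}^{[1]} \leq \alpha_{ii}^{[2]}$ from \eqref{eq:strength_order}, is exactly \eqref{eq:MAC_GDoF_region_1}--\eqref{eq:MAC_GDoF_region_2} with $\pi_i = \mathrm{id}$, and this in particular upper bounds the original GDoF in these coordinates.

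The cyclic bounds are the technical core. Fix $(i_1,\ldots,i_m) \in \Sigma(\langle K \rangle)$ and a selection $l_{i_j} \in \{1,2\}$ for each $j$. Following the genie strategy of \cite{Geng2015}, I would furnish each Rx-$i_j$ with a side-information sequence $S_{i_j}^n$ consisting of the signal $\sqrt{P^{\alpha_{i_{j+1} i_j}^{[l_{i_{j+1}}]}}}\, e^{\mathrm{j}\theta_{i_{j+1} i_j}^{[l_{i_{j+1}}]}}\, X_{i_{j+1}}^{[l_{i_{j+1}}]}$ corrupted by unit-variance complex Gaussian noise, so that $S_{i_j}^n$ resolves the interference from Tx-$(l_{i_{j+1}}, i_{j+1})$ at Rx-$i_j$ up to a bounded gap. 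Starting from Fano's inequality,
\begin{align*}
n\! \sum_{j=1}^{m}\sum_{s \in \langle l_{i_j} \rangle} R_{i_j}^{[s]}
&\leq \sum_{j=1}^{m} I\bigl(W_{i_j}^{[1]},\ldots,W_{i_j}^{[l_{i_j}]};\; Y_{i_j}^{n}, S_{i_j}^{n}\bigr) + n\epsilon_n \\
&= \sum_{j=1}^{m}\bigl[h(Y_{i_j}^{n}, S_{i_j}^{n}) - h(Y_{i_j}^{n}, S_{i_j}^{n}\mid W_{i_j}^{[1]},\ldots,W_{i_j}^{[l_{i_j}]})\bigr] + n\epsilon_n,
\end{align*}
I would then telescope around the cycle: the unconditional term at cell $i_j$ contributes $\alpha_{i_j i_j}^{[l_{i_j}]} \cdot n \log P + o(n \log P)$, while the conditional term is bounded below, up to $o(n \log P)$, by $h(S_{i_j}^n)$, which scales like $\alpha_{i_{j+1} i_j}^{[l_{i_{j+1}}]} \cdot n \log P$. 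Cyclically reindexing $\sum_j \alpha_{i_{j+1} i_j}^{[l_{i_{j+1}}]} = \sum_j \alpha_{i_j i_{j-1}}^{[l_{i_j}]}$ and dividing by $n \log P$ yields \eqref{eq:IMAC_TIN_region_3} with $\bm{\pi} = \bm{\mathrm{id}}$.

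The main obstacle is justifying the telescoping, which is where \eqref{eq:TIN_condition_1}--\eqref{eq:TIN_condition_2} are actually exercised. Relative to \cite{Geng2015}, each cell now contains two transmit signals of possibly distinct strengths, so after conditioning on $W_{i_j}^{[1]},\ldots,W_{i_j}^{[l_{i_j}]}$ the residual entropy may still carry a transmit codeword (when $l_{i_j}=1$), and the genie $S_{i_j}^n$ singles out only one user index $l_{i_{j+1}}$. Condition \eqref{eq:TIN_condition_1} guarantees that every direct link dominates the sum of its incoming and outgoing interference levels, so that the $(\cdot)^+$ expressions in \eqref{eq:GDoF per user_1}--\eqref{eq:GDoF per user_2} are inactive and, crucially, $S_{i_j}^n$ is resolvable at Rx-$i_j$ from $Y_{i_j}^n$ together with the desired messages. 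Condition \eqref{eq:TIN_condition_2}, rewritten as $(\alpha_{ii}^{[2]}-\alpha_{ij}^{[2]})-(\alpha_{ii}^{[1]}-\alpha_{ij}^{[1]}) \geq \min\{\alpha_{ij}^{[1]},\alpha_{ij}^{[2]}\}$, is the IMAC-specific ingredient: it ensures that the two possible choices $l_{i_{j+1}} \in \{1,2\}$ of the user index carried in the genie give compatible calibrations, so the lower bound on the conditional entropy holds uniformly across every selection pattern appearing in \eqref{eq:IMAC_TIN_region_3}. Collecting all three families of bounds exactly recovers the polyhedron of Theorem \ref{theorem:TIN_region} at $\bm{\pi}=\bm{\mathrm{id}}$, completing the converse.
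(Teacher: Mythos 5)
Your overall architecture (achievability from Theorem \ref{theorem:TIN_region}, in-cell MAC bounds from a cut-set/genie argument, cyclic bounds from a Geng-style genie plus telescoping) matches the paper, and the first two families of bounds are handled correctly. The cyclic bound, however, has a genuine gap, and it sits exactly where the work is. Your genie hands Rx-$i_j$ a noisy copy of the \emph{incoming} interference $X_{i_{j+1}}^{[l_{i_{j+1}}]}$; the correct construction (as in \cite{Geng2015} and the paper) hands Rx-$i_j$ a noisy copy of its \emph{own} cell's transmit signal as observed at the downstream receiver $i_{j-1}$, i.e.\ $S_{i_j} = U_{i_j}$ with $U_{i_j}(t) = h_{i_j i_{j-1}}^{[l_{i_j}]}\tilde{X}_{i_j}^{[l_{i_j}]}(t)+Z_{i_{j-1}}(t)$. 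The direction matters: with the correct genie, $h(Y_{i_j}^n\mid S_{i_j}^n, W_{i_j}^{[1:l_{i_j}]}) = h(U_{i_{j+1}}^n)$, so summing around the cycle produces the pairings $h(S_{i_j}^n)-h(U_{i_j}^n)$, which are exactly zero (or boundedly small) \emph{term by term} for arbitrary codebooks. Your version instead forces you to lower-bound $h(S_{i_j}^n)$ by $\alpha_{i_{j+1}i_j}^{[l_{i_{j+1}}]}\, n\log P$ in isolation, and that claim is false: for an arbitrary (e.g.\ deterministic or low-entropy) codeword, $h(S_{i_j}^n)$ can be as small as $n\log(\pi e)$, in which case your per-cell contribution degrades to $\alpha_{i_j i_j}^{[l_{i_j}]}$ and the negative cross-link terms in \eqref{eq:IMAC_TIN_region_3} never appear. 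Avoiding precisely this step is the entire point of the cyclic cancellation.

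The second missing ingredient is the actual mechanism by which \eqref{eq:TIN_condition_2} enters. The paper partitions the participating cells into $\mathcal{S}_1$ ($l_{i_j}=1$), $\mathcal{S}_2$ ($l_{i_j}=2$ with $\alpha_{i_ji_{j-1}}^{[1]}\le\alpha_{i_ji_{j-1}}^{[2]}$), and the problematic class $\mathcal{S}_3$ ($l_{i_j}=2$ with $\alpha_{i_ji_{j-1}}^{[1]}>\alpha_{i_ji_{j-1}}^{[2]}$), where both in-cell transmitters interfere downstream and the cross-link order is reversed relative to the direct links. For $\mathcal{S}_3$ the genie cannot be taken equal to $U_{i_j}$; it is replaced by a scaled copy of the desired signal sum, and \eqref{eq:TIN_condition_2} is invoked exactly to guarantee $h(S_{i_j}^n)-h(U_{i_j}^n)\le n$ via \cite[Lemma 8]{Gherekhloo2016}. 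Your description of \eqref{eq:TIN_condition_2} as ensuring ``compatible calibrations'' does not identify this case split or the entropy-difference lemma, which is the IMAC-specific technical core of the converse. (A minor further remark: the paper actually proves a finite-SNR outer bound, Theorem \ref{theorem:outer_bound}, and deduces the GDoF converse from it; a purely asymptotic argument would suffice for Theorem \ref{theorem:TIN_optimality}, but the two issues above would still need to be fixed.)
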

The proof of Theorem \ref{theorem:TIN_optimality} is given in Section \ref{sec:TIN Optimality}.
The condition in \eqref{eq:TIN_condition_1} is essentially the one identified by Geng \emph{et al.} in \cite{Geng2015},
applied to all  regular IC subnetworks of the IMAC.
On the other hand, a special case of \eqref{eq:TIN_condition_2}
was identified by Gherekhloo \emph{et al.} in \cite{Gherekhloo2016} for the PIMAC described in Section \ref{sec:introduction}.
Note that under the above TIN conditions, we have $\mathcal{D} = \mathcal{P}^{\star} = \mathcal{P}  = \mathcal{P}_{\bm{\mathrm{id}}}$.

Before we proceed, it is worthwhile highlighting that as pointed out in \cite[Remark 1]{Geng2016}, existing TIN-optimality results are
\emph{``primarily in the form of sufficient conditions''} and that the necessity of such conditions
\emph{``remains undetermined in most cases''}.
The TIN-optimality result in Theorem \ref{theorem:TIN_optimality} is no exception to most existing results in that regards.
\section{TIN-Achievable GDoF Region}
\label{sec:TIN_region}
In this part, we prove Theorem \ref{theorem:TIN_region} by constructing a potential graph \cite{Geng2015,Geng2016} for the considered IMAC
and invoking the potential theorem \cite{Schrijver2002}.
To avoid cumbersome notation, we work with $\mathcal{P}_{\bm{\mathrm{id}}}$.
All derivations extend to $\mathcal{P}_{\bm{\pi}}$ by replacing each superscript $l_{k}$ with the corresponding
$\pi_{k}(l_{k})$.
\subsection{Feasible Power Allocation}
The first step towards applying the potential theorem is to derive the conditions of feasible power allocation.
To this end, we rewrite the inequalities in \eqref{eq:polyhedral_region_3} and \eqref{eq:polyhedral_region_4} as
\begin{align}
\label{eq:GDoF_polyhedral_1}
d_{k}^{[1]}  &\leq \min\biggl\{ r_{k}^{[1]} + \alpha_{kk}^{[1]} ,\min_{j \neq k} \Bigl\{ \min\bigl\{
r_{k}^{[1]} - r_{j}^{[1]} + \alpha_{kk}^{[1]} - \alpha_{jk}^{[1]},
r_{k}^{[1]} - r_{j}^{[2]} + \alpha_{kk}^{[1]} - \alpha_{jk}^{[2]}\bigr\}  \Bigr\} \biggr\} \\
\nonumber
d_{k}^{[2]}  &\leq \min\biggl\{ r_{k}^{[2]} + \alpha_{kk}^{[2]},
\min_{j \neq k} \Bigl\{ \min\bigl\{
r_{k}^{[2]} - r_{j}^{[1]} + \alpha_{kk}^{[2]} - \alpha_{jk}^{[1]},
r_{k}^{[2]} - r_{j}^{[2]} + \alpha_{kk}^{[2]} - \alpha_{jk}^{[2]}\bigr\}\Bigr\} , \\
\label{eq:GDoF_polyhedral_2}
& \quad \quad
r_{k}^{[2]} - r_{k}^{[1]} + \alpha_{kk}^{[2]} - \alpha_{kk}^{[1]}   \biggr\}.
\end{align}
From \eqref{eq:GDoF_polyhedral_1} and \eqref{eq:GDoF_polyhedral_2}, it follows that the polyhedral TIN region $\mathcal{P}_{\bm{\mathrm{id}}}$, described by the inequalities in \eqref{eq:polyhedral_region_1}--\eqref{eq:polyhedral_region_4} while setting $\bm{\pi} = \bm{\mathrm{id}}$, is equivalently described by the following inequalities
\begin{align}
\label{eq:polyhedral_region_21}
r_{k}^{[l_{k}]} & \leq 0,  \  \forall (l_{k},k) \in \mathcal{K}  \\
\label{eq:polyhedral_region_22}
d_{k}^{[l_{k}]} & \geq 0,  \  \forall (l_{k},k) \in \mathcal{K}  \\
\label{eq:polyhedral_region_23}
d_{k}^{[l_{k}]} & \leq \alpha_{kk}^{[l_{k}]} + r_{k}^{[l_{k}]},
\ \forall (l_{k},k) \in \mathcal{K}  \\
\label{eq:polyhedral_region_24}
d_{k}^{[l_{k}]} & \leq  r_{k}^{[l_{k}]}  -  r_{j}^{[l_{j}]} + \alpha_{kk}^{[l_{k}]} -  \alpha_{jk}^{[l_{j}]} , \
\forall (l_{k},k), (l_{j},j) \in \mathcal{K}, \ j \neq k \\
\label{eq:polyhedral_region_25}
d_{k}^{[2]} & \leq r_{k}^{[2]} -   r_{k}^{[1]} + \alpha_{kk}^{[2]} - \alpha_{kk}^{[1]}, \ \forall k \in \langle K \rangle.
\end{align}
After rearranging, the inequalities in \eqref{eq:polyhedral_region_21}--\eqref{eq:polyhedral_region_25} are rewritten as
\begin{align}
\label{eq:polyhedral_region_31}
d_{k}^{[l_{k}]} & \geq 0,  \  \forall (l_{k},k) \in \mathcal{K}  \\
\label{eq:polyhedral_region_32}
r_{k}^{[l_{k}]} & \leq 0,  \  \forall (l_{k},k) \in \mathcal{K}  \\
\label{eq:polyhedral_region_33}
-r_{k}^{[l_{k}]} & \leq \alpha_{kk}^{[l_{k}]} - d_{k}^{[l_{k}]},
\ \forall (l_{k},k) \in \mathcal{K}  \\
\label{eq:polyhedral_region_34}
r_{j}^{[l_{j}]} - r_{k}^{[l_{k}]}  & \leq \alpha_{kk}^{[l_{k}]} -  \alpha_{jk}^{[l_{j}]} - d_{k}^{[l_{k}]}, \
\forall (l_{k},k), (l_{j},j) \in \mathcal{K}, \ j \neq k \\
\label{eq:polyhedral_region_35}
r_{k}^{[1]}   - r_{k}^{[2]} & \leq \alpha_{kk}^{[2]} - \alpha_{kk}^{[1]} - d_{k}^{[2]}, \ \forall k \in \langle K \rangle.
\end{align}
Hence, a GDoF tuple $\mathbf{d} \in \mathbb{R}_{+}^{2K}$ is in the polyhedral TIN region
$\mathcal{P}_{\bm{\mathrm{id}}}$ if and only if there exists a power allocation tuple $\mathbf{r}\in \mathbb{R}^{2K}$ such that \eqref{eq:polyhedral_region_32}--\eqref{eq:polyhedral_region_35} hold.
\subsection{Potential Graph}
Next, we construct the potential graph \cite{Geng2015,Geng2016}.
This is a directed graph (digraph) $\mathcal{G}_{\mathrm{p}} = (\mathcal{V}, \mathcal{E} )$,
where the set of vertices $\mathcal{V}$ and the set of directed edges (or edges) $\mathcal{E}$
are given by
\begin{align}
\mathcal{V} & = \{u\} \cup \left\{ v_{k}^{[l_{k}]} : (l_{k},k) \in \mathcal{K} \right\} \\
\mathcal{E} & = \mathcal{E}_{1}' \cup \mathcal{E}_{1}'' \cup  \mathcal{E}_{2} \cup \mathcal{E}_{3}' \cup \mathcal{E}_{3}'' \\
\mathcal{E}_{1}' & = \left\{\big(v_{k}^{[1]} , v_{k}^{[2]} \big) :
k \in \langle K \rangle \right\} \\
\mathcal{E}_{1}'' & = \left\{\big(v_{k}^{[2]} , v_{k}^{[1]} \big) :
k \in \langle K \rangle  \right\} \\
\mathcal{E}_{2} & = \left\{\big(v_{k}^{[l_{k}]} , v_{j}^{[l_{j}]} \big) : (l_{k},k), (l_{j},j) \in \mathcal{K}, \ k \neq j \right\} \\
\mathcal{E}_{3}' & = \left\{ \big(u ,v_{k}^{[l_{k}]} \big) : (l_{k},k) \in \mathcal{K} \right\} \\
\mathcal{E}_{3}'' & = \left\{\big(v_{k}^{[l_{k}]} , u \big) : (l_{k},k) \in \mathcal{K} \right\}.
\end{align}
We define the length function $l:\mathcal{E} \rightarrow \mathbb{R}$ and assign the following lengths to different edges
\begin{align}
l\big(v_{k}^{[1]} , v_{k}^{[2]} \big) & = \alpha_{kk}^{[1]} - d_{k}^{[1]}, \ \forall k \in \langle K \rangle  \\
l\big(v_{k}^{[2]} , v_{k}^{[1]} \big) & = \alpha_{kk}^{[2]} - \alpha_{kk}^{[1]} -d_{k}^{[2]}, \ \forall k \in \langle K \rangle   \\
l\big(v_{k}^{[l_{k}]} , v_{j}^{[l_{j}]} \big) & =  \alpha_{kk}^{[l_{k}]} -  \alpha_{jk}^{[l_{j}]} - d_{k}^{[l_{k}]}, \
\forall (l_{k},k), (l_{j},j) \in \mathcal{K}, \ k \neq j \\
l\big(v_{k}^{[l_{k}]} , u \big) & =  \alpha_{kk}^{[l_{k}]}  - d_{k}^{[l_{k}]}, \
\forall (l_{k},k) \in \mathcal{K} \\
l\big( u, v_{k}^{[l_{k}]} \big) & =  0, \
\forall (l_{k},k) \in \mathcal{K}.
\end{align}
By definition \cite{Schrijver2002}, the function $p: \mathcal{V} \rightarrow \mathbb{R}$ is called a potential if for any pair of vertices
$a,b \in \mathcal{V}$ such that $(a,b) \in \mathcal{E}$, we have $l(a,b) \geq p(b) - p(a)$.
These conditions depend only on the difference between potential function values. Therefore, if there exists a valid potential function, we may assume without loss of generality that $p(u) = 0$, i.e. vertex $u$ is set as the ground.
By setting $p\big(v_{k}^{[l_{k}]}\big) = r_{k}^{[l_{k}]}$, it can be seen that such potential function values should satisfy
\begin{align}
\label{eq:potential_polyhedral_region_1}
r_{k}^{[2]}  - r_{k}^{[1]} & \leq
\alpha_{kk}^{[1]} - d_{k}^{[1]}, \ \forall k \in \langle K \rangle  \\
\label{eq:potential_polyhedral_region_2}
r_{k}^{[1]}   - r_{k}^{[2]} & \leq
\alpha_{kk}^{[2]} - \alpha_{kk}^{[1]} - d_{k}^{[2]}, \ \forall k \in \langle K \rangle \\
\label{eq:potential_polyhedral_region_3}
r_{j}^{[l_{j}]} - r_{k}^{[l_{k}]}  & \leq
\alpha_{kk}^{[l_{k}]} -  \alpha_{jk}^{[l_{j}]} - d_{k}^{[l_{k}]}, \
\forall (l_{k},k), (l_{j},j) \in \mathcal{K}, \ j \neq k \\
\label{eq:potential_polyhedral_region_4}
-r_{k}^{[l_{k}]} & \leq
\alpha_{kk}^{[l_{k}]} - d_{k}^{[l_{k}]}, \ \forall (l_{k},k) \in \mathcal{K} \\
\label{eq:potential_polyhedral_region_5}
r_{k}^{[l_{k}]} & \leq 0, \ \forall (l_{k},k) \in \mathcal{K}.
\end{align}
It is easily to check that the inequalities in \eqref{eq:potential_polyhedral_region_2}--\eqref{eq:potential_polyhedral_region_5}
are equivalent to the ones in \eqref{eq:polyhedral_region_32}--\eqref{eq:polyhedral_region_35}.
Moreover, the inequality in \eqref{eq:potential_polyhedral_region_1} is obtained by adding the inequalities in
\eqref{eq:polyhedral_region_32} and \eqref{eq:polyhedral_region_33}.
Therefore, it follows that  \emph{$\mathbf{d} \in \mathbb{R}_{+}^{2K}$ is in $\mathcal{P}_{\bm{\mathrm{id}}}$
if and only if there exists a valid potential function for $\mathcal{G}_{\mathrm{p}}$}.
At this point, we are ready to invoke the \emph{potential theorem \cite[Th. 8.2]{Schrijver2002}: there exists a potential function for a digraph $\mathcal{G}_{\mathrm{p}}$ if and only if each directed circuit in $\mathcal{G}_{\mathrm{p}}$ has a non-negative length.}

From the above, we conclude that the GDoF tuple $\mathbf{d} \in \mathbb{R}_{+}^{2K}$ is in
the polyhedral region $\mathcal{P}_{\bm{\mathrm{id}}}$ if and only if the length of each directed circuit in the potential graph $\mathcal{G}_{\mathrm{p}}$ is non-negative.
\subsection{Directed Circuits and GDoF Inequalities}
In this part, we examine all valid directed circuits (or circuits for short) of $\mathcal{G}_{\mathrm{p}}$ and derive the corresponding GDoF
inequalities.
When dealing with circuits, we refer to a vertex of type $v_{i}^{[l_{i}]}$ as a user.
It is readily seen that circuits of $\mathcal{G}_{\mathrm{p}}$ can be categorized into single-cell circuits and multi-cell circuits,
 depending on the participating users, as we see in what follows.
\subsubsection{Single-Cell Circuits}
Such circuits involve users belonging to only one cell and can be further categorized into:
\begin{itemize}
\item Single-user circuits of the form $\big( u \rightarrow v_{i}^{[l_{i}]} \rightarrow u \big)$, $\forall (l_{i},i) \in \mathcal{K}$.
 From the non-negative length condition, each of such circuits yields a single user bound given by
\begin{equation}
\label{eq:GDoF_single_circuit_1}
d_{i}^{[l_{i}]} \leq \alpha_{ii}^{[l_{i}]}.
\end{equation}
\item Multi-user circuits of the form $\big( u \rightarrow v_{i}^{[2]} \rightarrow  v_{i}^{[1]}  \rightarrow u \big)$ or
 $\big( v_{i}^{[2]}  \rightarrow  v_{i}^{[1]}  \rightarrow v_{i}^{[2]}  \big)$, $\forall i \in \langle K \rangle$.
 From the non-negative length condition applied to such circuits, we obtain
\begin{equation}
\label{eq:GDoF_single_circuit_2}
d_{i}^{[1]} + d_{i}^{[2]}  \leq \alpha_{ii}^{[2]}.
\end{equation}
\item Multi-user circuits of the form $\big( u \rightarrow v_{i}^{[1]}  \rightarrow  v_{i}^{[2]}  \rightarrow u \big)$, $\forall i \in \langle K \rangle$,
 from which we obtain
\begin{equation}
\label{eq:GDoF_single_circuit_3}
d_{i}^{[1]} + d_{i}^{[2]}  \leq \alpha_{ii}^{[1]} +  \alpha_{ii}^{[2]}.
\end{equation}
\end{itemize}
It can be seen that for $l_{i} = 2$, the GDoF inequality in \eqref{eq:GDoF_single_circuit_1} is
redundant since it is implied by \eqref{eq:GDoF_single_circuit_2}.
Moreover, the inequality in \eqref{eq:GDoF_single_circuit_3} is loose in general compared to the one in \eqref{eq:GDoF_single_circuit_2}.
\subsubsection{Multi-Cell Circuits}
Such circuits involve users belonging to more than one cell.
In particular, consider a cyclic sequence of tuples given by $\big( (l_{1},i_{1}), \ldots,(l_{n},i_{n}) \big) \in \Sigma(\mathcal{K})$, such that
 $i_{j'} \neq i_{j''}$ for some $j',j'' \in \langle n \rangle$.
The corresponding multi-cell circuit of $\mathcal{G}_{\mathrm{p}}$ takes one of the two following forms.
\begin{itemize}
\item Does not traverse $u$: $\big( v_{i_{0}}^{[l_{0}]} \rightarrow v_{i_{1}}^{[l_{1}]} \rightarrow \cdots  \rightarrow v_{i_{n}}^{[l_{n}]}  \big)$,
 where $(l_{0},i_{0}) = (l_{n},i_{n})$.
\item Traverses  $u$: $\big( u \rightarrow v_{i_{1}}^{[l_{1}]} \rightarrow \cdots  \rightarrow v_{i_{n}}^{[l_{n}]}  \rightarrow u  \big)$.
\end{itemize}
From the non-negative length condition, a multi-cell circuit of the first form yields the inequality
\begin{equation}
\label{eq:GDoF_circuit_form_1}
\sum_{j \in \langle n \rangle} d_{i_{j}}^{[l_{j}]}
\leq
\sum_{j \in \langle n \rangle}  \alpha_{i_{j}i_{j}}^{[l_{j}]} - \alpha_{i_{j+1}i_{j}}^{[l_{j+1}]}
\mathbbm{1}_{\mathcal{E}_{1}'^{\mathrm{c}}}\big(v_{i_{j}}^{[l_{j}]}, v_{i_{j+1}}^{[l_{j+1}]} \big)
\end{equation}
where
\begin{equation}
\nonumber
\mathbbm{1}_{\mathcal{E}_{1}'^{\mathrm{c}}}\big(v_{i}^{[l_{i}]}, v_{k}^{[l_{k}]} \big) =
\begin{cases}
0, \ \text{if}  \ \big(v_{i}^{[l_{i}]}, v_{k}^{[l_{k}]} \big)  \in \mathcal{E}_{1}' \\
1, \ \text{otherwise}.
\end{cases}
\end{equation}
Note that a modulo operation is used in \eqref{eq:GDoF_circuit_form_1}, and throughout this part, such that
$i_{n+1} = i_{1}$ and $l_{n+1} = l_{1}$.
On the other hand, a circuit of the second form gives the inequality
\begin{equation}
\label{eq:GDoF_circuit_form_2}
\sum_{j \in \langle n \rangle} d_{i_{j}}^{[l_{j}]}
\leq
\alpha_{i_{n}i_{n}}^{[l_{n}]} + \sum_{j =1}^{n-1}  \alpha_{i_{j}i_{j}}^{[l_{j}]} - \alpha_{i_{j+1}i_{j}}^{[l_{j+1}]}
\mathbbm{1}_{\mathcal{E}_{1}'^{\mathrm{c}}}\big(v_{i_{j}}^{[l_{j}]}, v_{i_{j+1}}^{[l_{j+1}]} \big)
\end{equation}
It is readily seen that \eqref{eq:GDoF_circuit_form_1} is tighter than \eqref{eq:GDoF_circuit_form_2} as
$\alpha_{i_{1}i_{n}}^{[l_{1}]}\mathbbm{1}_{\mathcal{E}_{1}'^{\mathrm{c}}}\big(v_{i_{n}}^{[l_{n}]}, v_{i_{1}}^{[l_{1}]} \big) \geq 0$.
Therefore, it is sufficient to consider multi-cell circuits that do not traverse $u$.

Next, we show that the GDoF inequality in \eqref{eq:GDoF_circuit_form_1} is necessarily redundant if the underlying circuit belongs to
at least one of the following classes:
\begin{enumerate}[label=C.{\arabic*}]
\item Circuits that traverse at least one edge in $\mathcal{E}_{1}'$, i.e. with two cyclicly adjacent users that belong to the same cell
$i \in \langle K \rangle$ and $v_{i}^{[1]}$ preceded $v_{i}^{[2]}$ in the cyclic order.
\label{cond:redundant_1}
\item Circuits that traverse $v_{i_{j}}^{[l_{j}]}$ and $v_{i_{k}}^{[l_{k}]}$, where $i_{j} = i_{k}$, $j \neq k+1$ and $k \neq j+1$,
i.e. with two cyclicly non-adjacent users that belong to the same cell.
\label{cond:redundant_2}
\item Circuits that traverse $v_{i}^{[2]}$, for some $i \in \langle K \rangle$, and do not traverse $v_{i}^{[1]}$.
\label{cond:redundant_3}
\end{enumerate}
First, suppose that we have a circuit in class \ref{cond:redundant_1}.
We may assume, without loss of generality, that $\big(v_{i_{1}}^{[l_{1}]}, v_{i_{2}}^{[l_{2}]} \big) \in \mathcal{E}_{1}'$, i.e. $i_{1} = i_{2}$, $l_{1} = 1$ and  $l_{2} = 2$.
The corresponding GDoF inequality is given by
\begin{equation}
\label{eq:GDoF_circuit_C1_1}
d_{i_{1}}^{[1]}  + d_{i_{1}}^{[2]} + \sum_{j=3}^{n} d_{i_{j}}^{[l_{j}]}
\leq
\alpha_{i_{1}i_{1}}^{[1]} + \alpha_{i_{1}i_{1}}^{[2]} - \alpha_{i_{3}i_{1}}^{[l_{3}]}
+
\sum_{j = 3}^{n}  \alpha_{i_{j}i_{j}}^{[l_{j}]} - \alpha_{i_{j+1}i_{j}}^{[l_{j+1}]}
\mathbbm{1}_{\mathcal{E}_{1}'^{\mathrm{c}}}\big(v_{i_{j}}^{[l_{j}]}, v_{i_{j+1}}^{[l_{j+1}]} \big).
\end{equation}
Now consider the circuits
$\big( v_{i_{0}}^{[l_{0}]} \rightarrow v_{i_{1}}^{[1]} \rightarrow v_{i_{3}}^{[l_{3}]}  \rightarrow  \cdots  \rightarrow v_{i_{n}}^{[l_{n}]}\big)$
and
$\big( u \rightarrow v_{i_{1}}^{[2]} \rightarrow u \big)$.
These are valid circuits of $\mathcal{G}_{\mathrm{p}}$, and give rise to the GDoF inequalities
\begin{align}
\label{eq:GDoF_circuit_C1_2}
d_{i_{1}}^{[1]}  + \sum_{j=3}^{n} d_{i_{j}}^{[l_{j}]}
& \leq
\alpha_{i_{1}i_{1}}^{[1]} - \alpha_{i_{3}i_{1}}^{[l_{3}]} +
\sum_{j = 3}^{n}  \alpha_{i_{j}i_{j}}^{[l_{j}]} - \alpha_{i_{j+1}i_{j}}^{[l_{j+1}]}
\mathbbm{1}_{\mathcal{E}_{1}'^{\mathrm{c}}}\big(v_{i_{j}}^{[l_{j}]}, v_{i_{j+1}}^{[l_{j+1}]} \big) \\
\label{eq:GDoF_circuit_C1_3}
d_{i_{1}}^{[2]} & \leq \alpha_{i_{1}i_{1}}^{[2]}.
\end{align}
By adding \eqref{eq:GDoF_circuit_C1_2} and \eqref{eq:GDoF_circuit_C1_3}, we obtain \eqref{eq:GDoF_circuit_C1_1}, which is therefore redundant.
If the circuit underlying the GDoF inequality in \eqref{eq:GDoF_circuit_C1_2} is also in class \ref{cond:redundant_1},
we apply the same argument above.
We do this recursively, hence showing that all circuits in class \ref{cond:redundant_1} yield redundant GDoF inequalities.

Next, after excluding all circuits in class \ref{cond:redundant_1}, suppose that we have a circuit in class \ref{cond:redundant_2} and not in class \ref{cond:redundant_1} such
that $i_{1} = i_{k}$, $k \neq 2$ and $k \neq n$ (also $k \neq 0$).
We may further assume, without loss of generality, that $l_{1} = 1$ and $l_{k} = 2$.
The corresponding GDoF inequality is given by
\begin{equation}
\label{eq:GDoF_circuit_C2_1}
\sum_{j \in \langle n \rangle} d_{i_{j}}^{[l_{j}]}
\leq
\sum_{j \in \langle n \rangle}  \alpha_{i_{j}i_{j}}^{[l_{j}]} - \alpha_{i_{j+1}i_{j}}^{[l_{j+1}]}
\end{equation}
where there is no need to employ the indicator function in \eqref{eq:GDoF_circuit_form_1} as the underlying circuit is not in \ref{cond:redundant_1}.
Now consider the circuits
$\big( v_{i_{1}}^{[l_{1}]} \rightarrow  \cdots  \rightarrow v_{i_{k}}^{[l_{k}]}  \rightarrow v_{i_{1}}^{[l_{1}]}\big)$
and $\big( v_{i_{1}}^{[l_{1}]} \rightarrow v_{i_{k+1}}^{[l_{k+1}]} \rightarrow  \cdots  \rightarrow v_{i_{n}}^{[l_{n}]}
\rightarrow v_{i_{1}}^{[l_{1}]}\big)$.
It can be easily checked that these two circuits are valid for $\mathcal{G}_{\mathrm{p}}$ and that they are not in class \ref{cond:redundant_1}.
The corresponding GDoF inequalities are given by
\begin{align}
\label{eq:GDoF_circuit_C2_2}
\sum_{j=1}^{k} d_{i_{j}}^{[l_{j}]}
& \leq \alpha_{i_{k}i_{k}}^{[l_{k}]} - \alpha_{i_{1}i_{k}}^{[l_{1}]} +
\sum_{j = 1}^{k-1}  \alpha_{i_{j}i_{j}}^{[l_{j}]} - \alpha_{i_{j+1}i_{j}}^{[l_{j+1}]} \\
\label{eq:GDoF_circuit_C2_3}
d_{i_{1}}^{[l_{1}]} + \sum_{j=k+1}^{n} d_{i_{j}}^{[l_{j}]} & \leq
\alpha_{i_{1}i_{1}}^{[l_{1}]} - \alpha_{i_{k+1}i_{1}}^{[l_{k+1}]}
+
\alpha_{i_{n}i_{n}}^{[l_{n}]} - \alpha_{i_{1}i_{n}}^{[l_{1}]}
+
\sum_{j = k+1}^{n-1}  \alpha_{i_{j}i_{j}}^{[l_{j}]} - \alpha_{i_{j+1}i_{j}}^{[l_{j+1}]}.
\end{align}
By adding the inequalities in \eqref{eq:GDoF_circuit_C2_2} and \eqref{eq:GDoF_circuit_C2_3}, while noting that $i_{k} = i_{1}$, we obtain
\begin{equation}
\label{eq:GDoF_circuit_C2_4}
d_{i_{1}}^{[l_{1}]} + \sum_{j \in \langle n \rangle} d_{i_{j}}^{[l_{j}]}
\leq
\sum_{j \in \langle n \rangle}  \alpha_{i_{j}i_{j}}^{[l_{j}]} - \alpha_{i_{j+1}i_{j}}^{[l_{j+1}]}.
\end{equation}
Comparing \eqref{eq:GDoF_circuit_C2_1} and \eqref{eq:GDoF_circuit_C2_4},
it can be seen that an extra $d_{i_{1}}^{[l_{1}]}$ is added to the left-hand-side of the latter.
Since $d_{i_{1}}^{[l_{1}]}\geq 0$, then \eqref{eq:GDoF_circuit_C2_4} implies \eqref{eq:GDoF_circuit_C2_1}.
If any of the two resulting circuits underlying the inequalities in \eqref{eq:GDoF_circuit_C2_2} and
\eqref{eq:GDoF_circuit_C2_3} is in class \ref{cond:redundant_2}, we apply the same argument above.
We do this recursively, hence showing redundancy of all circuits in class \ref{cond:redundant_2}.

Finally, suppose that we have a circuit in class \ref{cond:redundant_3} and not in \ref{cond:redundant_1} or \ref{cond:redundant_2}.
We may assume, without loss of generality, that $l_{1} = 2$ and $i_{j} \neq i_{1}$, $\forall j \in \langle 2:n \rangle$.
The corresponding GDoF inequality writes as the one in \eqref{eq:GDoF_circuit_C2_1}.
Consider the circuit given by
$\big( v_{i_{0}}^{[l_{0}]} \rightarrow v_{i_{1}}^{[l_{1}]}
\rightarrow v_{i_{1}}^{[1]} \rightarrow v_{i_{2}}^{[l_{2}]}  \rightarrow \cdots  \rightarrow v_{i_{n}}^{[l_{n}]}  \big)$,
where $v_{i_{1}}^{[1]}$ is included between $v_{i_{1}}^{[l_{1}]}$ and $ v_{i_{2}}^{[l_{2}]}$.
This is valid for $\mathcal{G}_{\mathrm{p}}$ and is not in \ref{cond:redundant_1} or \ref{cond:redundant_2}.
The corresponding GDoF inequality is given by
\begin{equation}
\label{eq:GDoF_circuit_C3_1}
d_{i_{1}}^{[1]} + \sum_{j \in \langle n \rangle} d_{i_{j}}^{[l_{j}]}
\leq
\sum_{j \in \langle n \rangle}  \alpha_{i_{j}i_{j}}^{[l_{j}]} - \alpha_{i_{j+1}i_{j}}^{[l_{j+1}]}.
\end{equation}
Comparing \eqref{eq:GDoF_circuit_C2_1} to \eqref{eq:GDoF_circuit_C3_1}, it can be seen that $d_{i_{1}}^{[1]}$ (i.e. an extra user)
is added to the left-hand-side without altering the right-hand-side.
Since $d_{i_{1}}^{[1]} \geq 0$, then \eqref{eq:GDoF_circuit_C3_1} is tighter in general.
Applying the same above argument recursively to the circuit underlying the inequality in \eqref{eq:GDoF_circuit_C3_1},
it is shown that circuits in class \ref{cond:redundant_3} are redundant.
\subsubsection{Combining Inequalities}
From single-cell circuits, we get the GDoF inequalities given by
\begin{equation}
\label{eq:GDoF_bounds_single_cell}
\sum_{s_{i} \in \langle l _{i} \rangle} d_{i}^{[s_{i}]} \leq \alpha_{ii}^{[l_{i}]}, \ \forall i \in \langle K \rangle, l_{i} \in \{1,2\}
\end{equation}
On the other hand, we only need to consider multi-cell circuits that do not traverse $u$ and do
not belong to any of the classes  \ref{cond:redundant_1}--\ref{cond:redundant_3}.
From such circuits, we get the inequalities
\begin{align}
\nonumber
\sum_{j \in \langle m \rangle}\sum_{s_{i_{j}} \in \langle l _{i_{j}} \rangle} d_{i_{j}}^{[s_{i_{j}}]} & \leq
\sum_{j \in \langle m \rangle}\alpha_{i_{j}i_{j}}^{[l_{i_{j}}]} - \alpha_{i_{j+1}i_{j}}^{[l_{i_{j+1}}]}
\stackrel{\text{(a)}}{=}
\sum_{j \in \langle m \rangle}\alpha_{i_{j}i_{j}}^{[l_{i_{j}}]} - \alpha_{i_{j}i_{j-1}}^{[l_{i_{j}}]}, \\
\label{eq:GDoF_bounds_multi_cell}
\forall l_{i_{j}} \in \{1,2\}, \ & (i_{1},\ldots,i_{m}) \in \Sigma\big(\langle K \rangle\big), m \in \langle 2:K \rangle
\end{align}
where (a) follows by rearranging the terms.
Combining \eqref{eq:GDoF_bounds_single_cell} and \eqref{eq:GDoF_bounds_multi_cell} with the non-negativity constraint on
$d_{i}^{[l_{i}]}$, $\forall (l_{i},i) \in \mathcal{K}$, leads directly to the characterization in Theorem \ref{theorem:TIN_region}.
\section{TIN Optimality}
\label{sec:TIN Optimality}
\subsection{Outer Bound}
The TIN-optimality result in Theorem \ref{theorem:TIN_optimality} follows directly from the following outer bound.
\begin{theorem}
\label{theorem:outer_bound}
For the IMAC with input-output relationship in \eqref{eq:system model 2}, if the TIN-optimality conditions in \eqref{eq:TIN_condition_1}
and \eqref{eq:TIN_condition_2} hold, then the capacity region $\mathcal{C}$ is included in the set of rate tuples satisfying
\begin{align}
\label{eq:capacity_outer_1}
\sum_{s_{i}\in \langle l_{i} \rangle}R_{i}^{[s_{i}]} & \leq \log\left(1 + l_{i}P^{\alpha_{ii}^{[l_{i}]}} \right),
\ (l_{i},i) \in \mathcal{K}\\
\nonumber
\sum_{j \in \langle m \rangle } \sum_{s_{i_{j}} \in \langle l_{i_{j}} \rangle} R_{i_{j}}^{[s_{i_{j}}]} & \leq
m + \sum_{j \in \langle m \rangle } \log\left( 1 +
(l_{i_{j}+1} + l_{i_{j}}) P^{\alpha_{i_{j}i_{j}}^{[l_{i_{j}}]}-\alpha_{i_{j}i_{j-1}}^{[l_{i_{j}}]}}  \right), \\
\label{eq:capacity_outer_2}
\forall l_{i_{j}} \in \{1,2\}, \ & (i_{1},\ldots,i_{m}) \in \Sigma\big(\langle K \rangle\big), m \in \langle 2:K \rangle.
\end{align}
\end{theorem}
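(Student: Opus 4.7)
The plan is to establish both families of inequalities by applying Fano's inequality with carefully chosen Gaussian genie side information, in the spirit of Etkin--Tse--Wang \cite{Etkin2008} and the $K$-user IC treatment of Geng \emph{et al.} \cite{Geng2015}, while adapting the genie construction to the two-user MAC structure within each cell.

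For the single-cell bounds in \eqref{eq:capacity_outer_1}, the argument is the simpler of the two. Fix a cell $i$ and $l_i \in \{1,2\}$, and endow Rx-$i$ with a genie that reveals the codewords of every transmitter outside cell $i$, so that all inter-cell interference can be exactly cancelled. The residual channel is a Gaussian MAC with $l_i$ active transmitters Tx-$(1,i),\ldots,\text{Tx-}(l_i,i)$. A standard MAC converse (Fano plus the Gaussian maximum-entropy bound) then gives
\begin{equation}
\nonumber
\sum_{s_i \in \langle l_i \rangle} R_i^{[s_i]} \leq \log\!\Bigl(1 + \sum_{s_i \in \langle l_i \rangle} |h_{ii}^{[s_i]}|^2 P_i^{[s_i]}\Bigr) \leq \log\bigl(1 + l_i P^{\alpha_{ii}^{[l_i]}}\bigr),
\end{equation}
where the final step uses the ordering \eqref{eq:strength_order} together with the definition of the channel strength levels.

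For the cyclic bounds in \eqref{eq:capacity_outer_2}, fix a cyclic sequence $(i_1,\ldots,i_m) \in \Sigma(\langle K \rangle)$ and a tuple $(l_{i_1},\ldots,l_{i_m}) \in \{1,2\}^m$, and restrict attention to the users $(s,i_j)$ with $s \in \langle l_{i_j} \rangle$. The genie supplied to Rx-$i_j$ should be a noise-corrupted version of the aggregate of the signals transmitted by its own users $(s,i_j)$, $s \in \langle l_{i_j} \rangle$, scaled according to the cross-link level $\alpha_{i_j i_{j-1}}^{[l_{i_j}]}$, so that it mimics the interference those signals inflict on the preceding receiver in the cycle. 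Applying Fano's inequality, summing across $j \in \langle m \rangle$, and expanding the resulting differential entropies by the chain rule, every negative entropy term at Rx-$i_j$ can be telescoped against the positive genie entropy at Rx-$i_{j+1}$, since the latter was designed to reproduce precisely the cross-link component that appears inside $Y_{i_j}^n$. What survives after cancellation is the per-cell difference $\alpha_{i_j i_j}^{[l_{i_j}]} - \alpha_{i_j i_{j-1}}^{[l_{i_j}]}$, while the additive $m$ on the right-hand side absorbs the finite Gaussian constants, and the multiplicity $(l_{i_{j+1}} + l_{i_j})$ inside the logarithm emerges from combining the $l_{i_j}$ in-cell signals at Rx-$i_j$ with the $l_{i_{j+1}}$ contributions that the next cell's selected users inject through the genie.

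The main obstacle is verifying that the TIN-optimality conditions are exactly what is required for the telescoping to close and for the leftover variances to remain $\Theta(1)$. Condition \eqref{eq:TIN_condition_1} plays its familiar \cite{Geng2015} role: within any subnetwork obtained by selecting one user per cell, the direct links dominate the crosstalk strongly enough that the noise added to the genie can be absorbed into the ambient AWGN without loss in the bound. The genuinely new ingredient is \eqref{eq:TIN_condition_2}, which must be invoked precisely when $l_{i_j} = 2$: it is the inequality that permits the weaker in-cell user $(1,i_j)$ to be hidden inside the stronger user's genie scaling without corrupting the telescoping. Pinning down this correspondence, and in particular checking that no choice of $(l_{i_1},\ldots,l_{i_m}) \in \{1,2\}^m$ produces a residual cross-term that escapes \eqref{eq:TIN_condition_2}, is where the bulk of the technical bookkeeping should concentrate; once the bounds are in hand, Theorem~\ref{theorem:TIN_optimality} follows by normalizing by $\log P$ and matching the resulting GDoF inequalities against the characterization of $\mathcal{P}_{\bm{\mathrm{id}}}$ in Theorem~\ref{theorem:TIN_region}.
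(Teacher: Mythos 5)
Your overall strategy is the paper's strategy: the single-cell bounds are handled exactly as you describe, and the cyclic bounds are obtained by a genie-aided Fano argument in which each Rx-$i_j$ receives a noisy, scaled version of its own cell's signals designed to telescope against the interference term $U_{i_{j+1}}^n$ appearing in $Y_{i_j}^n$. However, there is a concrete gap in your genie prescription for cells with $l_{i_j}=2$: you propose a single uniform genie (the aggregate of both in-cell signals scaled by the cross-link gain), whereas the argument only closes if you split these cells according to whether the cross-link ordering agrees with the direct-link ordering. The paper partitions participating cells into $\mathcal{S}_1$ ($l_{i_j}=1$), $\mathcal{S}_2$ ($l_{i_j}=2$ with $\alpha_{i_ji_{j-1}}^{[1]}\leq\alpha_{i_ji_{j-1}}^{[2]}$), and $\mathcal{S}_3$ ($l_{i_j}=2$ with $\alpha_{i_ji_{j-1}}^{[1]}>\alpha_{i_ji_{j-1}}^{[2]}$). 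For $\mathcal{S}_1\cup\mathcal{S}_2$ it first deletes the weaker user's cross link and sets $S_{i_j}=U_{i_j}$ exactly, so that $h(S_{i_j}^n)-h(U_{i_j}^n)=0$ with no condition needed; only for $\mathcal{S}_3$ is the genie the scaled in-cell aggregate, and there the bound $h(S_{i_j}^n)-h(U_{i_j}^n)\leq n$ is not free: it is precisely \cite[Lemma 8]{Gherekhloo2016}, whose hypothesis is the rewriting \eqref{eq:condition_TIN_S3} of \eqref{eq:TIN_condition_2} in the case $\min\{\alpha_{ij}^{[1]},\alpha_{ij}^{[2]}\}=\alpha_{ij}^{[2]}$. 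If you applied your aggregate genie to an $\mathcal{S}_2$ cell, the required inequality $\alpha_{i_ji_j}^{[2]}-2\alpha_{i_ji_{j-1}}^{[2]}\geq\alpha_{i_ji_j}^{[1]}-\alpha_{i_ji_{j-1}}^{[1]}$ is not implied by \eqref{eq:TIN_condition_2} (which in that case only yields $\alpha_{i_ji_j}^{[2]}-\alpha_{i_ji_{j-1}}^{[2]}\geq\alpha_{i_ji_j}^{[1]}$), so the entropy-difference step would fail; this is exactly the ``residual cross-term that escapes \eqref{eq:TIN_condition_2}'' you worried about, and the fix is the case split, not more bookkeeping within a single genie.

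A second, smaller omission: after the telescoping, the surviving terms $h(Y_{i_j}^n\mid S_{i_j}^n)-h(Z_{i_j}^n)$ must still be bounded by a Gaussian conditional-variance computation, and it is here --- not only in the genie step --- that both \eqref{eq:TIN_condition_1} and \eqref{eq:TIN_condition_2} are used again: for an $\mathcal{S}_2$ cell the conditional variance contains an additive term $P^{\alpha_{i_ji_j}^{[1]}}$ from the undecoded weaker user, and absorbing it into $P^{\alpha_{i_ji_j}^{[2]}-\alpha_{i_ji_{j-1}}^{[2]}}$ requires the $\mathcal{S}_2$ instance of \eqref{eq:TIN_condition_2}. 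Your sketch attributes \eqref{eq:TIN_condition_1} to the noise-absorption and \eqref{eq:TIN_condition_2} to the genie, but the actual division of labor is finer; writing out the conditional variance for each of the three cell types, as in \eqref{eq:sigma_conditional}--\eqref{eq:sigma_conditional_UB}, is what produces the coefficient $(l_{i_{j+1}}+l_{i_j})$ and the additive $m$ in \eqref{eq:capacity_outer_2}.
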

\begin{proof}
For each cell $i$, \eqref{eq:capacity_outer_1} is a cut-set upper bound and
follows from the MAC capacity region \cite{Cover2012} and \eqref{eq:strength_order}. Hence, we focus on the cyclic bounds in \eqref{eq:capacity_outer_2}.

Cells and users participating in a given cyclic bound are identified by the two sequences
$(i_{1},\ldots,i_{m}) \in \Sigma\big(\langle K \rangle\big)$ and $(l_{i_{1}},\ldots,l_{i_{m}}) \in \{1,2\}^{m}$.
Given such sequences, each participating cell $i_{j}$ is in one of the three following subsets:
$\mathcal{S}_{1} \triangleq \{i_{j}: l_{i_{j}} = 1 \}$,
$\mathcal{S}_{2} \triangleq \{i_{j}: l_{i_{j}} = 2,\; \alpha_{i_{j}i_{j-1}}^{[1]} \leq \alpha_{i_{j}i_{j-1}}^{[2]} \}$
and
$\mathcal{S}_{3} \triangleq \{i_{j}: l_{i_{j}} = 2,\; \alpha_{i_{j}i_{j-1}}^{[1]} >\alpha_{i_{j}i_{j-1}}^{[2]} \}$.
Next, we go through the following steps
\begin{itemize}
\item Eliminate all non-participating transmitters $(l_{i},i) \in \mathcal{K} \setminus \left\{(s_{i_{j}},i_{j}): s_{i_{j}} \in
\langle l_{i_{j}} \rangle,j \in \langle m \rangle \right\}$, all non-participating receivers
$i \in \langle K \rangle \setminus \{i_{1},\ldots,i_{m} \}$ and the corresponding messages.
\item Eliminate all interfering links except for links from
Tx-$(l_{i_{j}},i_{j})$ to Rx-$i_{j-1}$, $\forall j \in \langle m \rangle$,
and from Tx-$(1,i_{j})$ to Rx-$i_{j-1}$, $\forall i_{j} \in \mathcal{S}_{3}$.
\end{itemize}
We end up with a partially connected cyclic IMAC with input-output relationship
\begin{equation}
\label{eq:system model_cyclic}
Y_{i_{j}}(t) = \sum_{s_{i_{j}} \in \langle l_{i_{j}} \rangle} h_{i_{j}i_{j}}^{[s_{i_{j}}]} \tilde{X}_{i_{j}}^{[s_{i_{j}}]}(t) + U_{i_{j+1}}(t)
\end{equation}
where the interference plus noise $U_{i_{j}}(t)$, caused by cell $i_{j}$ to cell $i_{j-1}$, is given by
\begin{equation}
\label{eq:U_ij_t}
U_{i_{j}}(t) =
\begin{cases}
h_{i_{j}i_{j-1}}^{[l_{i_{j}}]} \tilde{X}_{i_{j}}^{[l_{i_{j}}]}(t) + Z_{i_{j-1}}(t), \ i_{j} \in \mathcal{S}_{1} \cup \mathcal{S}_{2} \\
h_{i_{j}i_{j-1}}^{[1]} \tilde{X}_{i_{j}}^{[1]}(t) + h_{i_{j}i_{j-1}}^{[2]} \tilde{X}_{i_{j}}^{[2]}(t) + Z_{i_{j-1}}(t),
\ i_{j}\in \mathcal{S}_{3}.
\end{cases}
\end{equation}
Since none of the above steps hurts the rates of the remaining messages, the channel in \eqref{eq:system model_cyclic}
is used for the outer bound.
From \eqref{eq:system model_cyclic} onwards, we revert back to the original channel notation for notational convenience, while maintaining
$|h_{ki}^{[l_{k}]} |^{2} P_{k}^{[l_{k}]} \geq 1$, $\forall (l_{k},k) \in \mathcal{K}, i\in \langle K \rangle$.

Next, we define the side information signal $S_{i_{j}}(t)$, $\forall j \in \langle m \rangle$, as
\begin{equation}
\nonumber
S_{i_{j}}(t) =
\begin{cases}
U_{i_{j}}(t), \ i_{j} \in \mathcal{S}_{1} \cup \mathcal{S}_{2} \\
\frac{h_{i_{j}i_{j-1}}^{[2]}}{h_{i_{j}i_{j}}^{[2]}}\left(h_{i_{j}i_{j}}^{[1]} \tilde{X}_{i_{j}}^{[1]}(t) +
h_{i_{j}i_{j}}^{[2]} \tilde{X}_{i_{j}}^{[2]}(t) \right) + Z_{i_{j-1}}(t),
\ i_{j}\in \mathcal{S}_{3}
\end{cases}
\end{equation}
and we provide $S_{i_{j}}^{n}$ for Rx-$i_{j}$ through a genie.
From Fano's inequality, we have
\begin{align}
\nonumber
n \sum_{s_{i_{j}} \in \langle l_{i_{j}} \rangle} R_{i_{j}}^{[s_{i_{j}}]} -  n \epsilon & \leq I
\big(W_{i_{j}}^{[1:l_{i_{j}}]} ; Y_{i_{j}}^{n}, S_{i_{j}}^{n}   \big) \\
\nonumber
&  = I\big(W_{i_{j}}^{[1:l_{i_{j}}]} ; S_{i_{j}}^{n}   \big) +
I\big(W_{i_{j}}^{[1:l_{i_{j}}]} ; Y_{i_{j}}^{n} | S_{i_{j}}^{n}   \big) \\
\nonumber
& = h\big(S_{i_{j}}^{n}\big) - h\big(S_{i_{j}}^{n}  | W_{i_{j}}^{[1:l_{i_{j}}]}\big) +
h\big(Y_{i_{j}}^{n}  | S_{i_{j}}^{n}\big) - h\big( Y_{i_{j}}^{n} |  S_{i_{j}}^{n} , W_{i_{j}}^{[1:l_{i_{j}}]} \big) \\
\label{eq:Fano_and_after}
& = h\big(S_{i_{j}}^{n}\big) - h\big(Z_{i_{j-1}}^{n} \big) +
h\big(Y_{i_{j}}^{n}  | S_{i_{j}}^{n}\big) - h\big( U_{i_{j+1}}^{n} \big)
\end{align}
where $W_{i_{j}}^{[1:l_{i_{j}}]} \triangleq W_{i_{j}}^{[1]},\ldots,W_{i_{j}}^{[l_{i_{j}}]}$.
Proceeding from \eqref{eq:Fano_and_after}, we have
\begin{equation}
\label{eq:sum_rate_bound}
n \sum_{j \in \langle m \rangle}\sum_{s_{i_{j}} \in \langle l_{i_{j}} \rangle} R_{i_{j}}^{[s_{i_{j}}]} - nm \epsilon   \leq
 \sum_{j \in \langle m \rangle} \left[ h\big(S_{i_{j}}^{n}\big)  - h\big(U_{i_{j}}^{n}\big) +
  h\big(Y_{i_{j}}^{n}  | S_{i_{j}}^{n}\big) - h\big(Z_{i_{j}}^{n} \big) \right].
\end{equation}
Considering the first difference of entropies in \eqref{eq:sum_rate_bound} for a given $j \in \langle m \rangle$,
it is clear that this  is equal to $0$ if $i_{j} \in \mathcal{S}_{1} \cup \mathcal{S}_{2} $.
Hence, we focus on $i_{j} \in \mathcal{S}_{3}$.
For this case, and from \eqref{eq:TIN_condition_2}, we have
\begin{equation}
\label{eq:condition_TIN_S3}
\alpha_{i_{j}i_{j}}^{[2]} - 2\alpha_{i_{j}i_{j-1}}^{[2]} \geq  \alpha_{i_{j}i_{j}}^{[1]} - \alpha_{i_{j}i_{j-1}}^{[1]}
\Leftrightarrow
\frac{P_{i_{j}}^{[1]}\big|h_{i_{j}i_{j-1}}^{[1]} \big|^{2}}{P_{i_{j}}^{[2]}\big|h_{i_{j}i_{j-1}}^{[2]} \big|^{2}} \geq
P_{i_{j}}^{[1]}\big|h_{i_{j}i_{j}}^{[1]} \big|^{2}\frac{\big|h_{i_{j}i_{j-1}}^{[2]} \big|^{2}}{\big|h_{i_{j}i_{j}}^{[2]} \big|^{2}}.
\end{equation}
The condition in \eqref{eq:condition_TIN_S3} allows us to apply \cite[Lemma 8]{Gherekhloo2016}, from which we obtain
\begin{equation}
\label{eq:difference_h_S_U}
 h\big(S_{i_{j}}^{n}\big)  - h\big(U_{i_{j}}^{n}\big)  \leq n.
\end{equation}
Now we turn our attention to the second difference of entropies in \eqref{eq:sum_rate_bound}. We have
\begin{align}
\nonumber
h\big(Y_{i_{j}}^{n}  | S_{i_{j}}^{n}\big) - h\big(Z_{i_{j}}^{n} \big) & \leq
\sum_{t \in \langle n \rangle} \big[ h\big(Y_{i_{j}}(t)  | S_{i_{j}}(t)\big) - h\big(Z_{i_{j}}(t) \big)  \big] \\
\label{eq:conditional_entropy_Gaussian}
& \leq \sum_{t \in \langle n \rangle} \big[ h\big(Y_{i_{j}}^{\mathrm{G}}(t)  | S_{i_{j}}^{\mathrm{G}}(t)\big) - h\big(Z_{i_{j}}(t) \big) \big] \\
\label{eq:conditional_entropy_Gaussian_2}
& \leq n \log \big( \sigma^{2}_{Y_{i_{j}}^{\mathrm{G}}  | S_{i_{j}}^{\mathrm{G}}} \big)
\end{align}
where $\mathrm{G}$ indicates that the corresponding inputs are i.i.d Gaussian, i.e.
$\tilde{X}_{i}^{[l_{i}]} \sim \mathcal{N}_{\mathbb{C}}\big(0,P_{i}^{[l_{i}]}\big)$,
and
\begin{equation}
\sigma^{2}_{Y_{i_{j}}^{\mathrm{G}}  | S_{i_{j}}^{\mathrm{G}}} \triangleq
\E \big[ |Y_{i_{j}}^{\mathrm{G}}|^{2} \big] - \E \big[ Y_{i_{j}}^{\mathrm{G}}S_{i_{j}}^{\mathrm{G}\ast} \big]
\E \big[ S_{i_{j}}^{\mathrm{G}}Y_{i_{j}}^{\mathrm{G}\ast} \big]
\big(\E \big[ |S_{i_{j}}^{\mathrm{G}}|^{2} \big]\big)^{-1}.
\end{equation}
Note that we omit the time index $t$ from \eqref{eq:conditional_entropy_Gaussian_2} onwards for brevity.
The inequality in \eqref{eq:conditional_entropy_Gaussian} follows because Gaussian distribution maximizes the
conditional differential entropy for a given covariance constraint.
Next, we calculate $\sigma^{2}_{Y_{i_{j}}^{\mathrm{G}}  | S_{i_{j}}^{\mathrm{G}}}$ as
\begin{equation}
\label{eq:sigma_conditional}
\sigma^{2}_{Y_{i_{j}}^{\mathrm{G}}  | S_{i_{j}}^{\mathrm{G}}}
=
\begin{cases}
\E\big[|U_{i_{j+1}}^{\mathrm{G}}|^{2}\big] + \frac{P_{i_{j}}^{[1]}\big|h_{i_{j}i_{j}}^{[1]}\big|^{2} }
{1 +P_{i_{j}}^{[1]}\big|h_{i_{j}i_{j-1}}^{[1]}\big|^{2} }, \ i_{j}\in \mathcal{S}_{1} \\
\E\big[|U_{i_{j+1}}^{\mathrm{G}}|^{2}\big] + P_{i_{j}}^{[1]}\big|h_{i_{j}i_{j}}^{[1]}\big|^{2}+
 \frac{P_{i_{j}}^{[2]}\big|h_{i_{j}i_{j}}^{[2]}\big|^{2} }
{1 +P_{i_{j}}^{[2]}\big|h_{i_{j}i_{j-1}}^{[2]}\big|^{2} }, \
i_{j}\in \mathcal{S}_{2} \\
\E\big[|U_{i_{j+1}}^{\mathrm{G}}|^{2}\big] +
 \frac{P_{i_{j}}^{[1]}\big|h_{i_{j}i_{j}}^{[1]}\big|^{2} + P_{i_{j}}^{[2]}\big|h_{i_{j}i_{j}}^{[2]}\big|^{2} }
{1 + \frac{ \big| h_{i_{j}i_{j-1}}^{[2]} \big|^{2} }{\big| h_{i_{j}i_{j}}^{[2]} \big|^{2}}
\big( P_{i_{j}}^{[1]}\big|h_{i_{j}i_{j}}^{[1]}\big|^{2} + P_{i_{j}}^{[2]}\big|h_{i_{j}i_{j}}^{[2]}\big|^{2} \big) }, \
i_{j}\in \mathcal{S}_{3}.
\end{cases}
\end{equation}
The expressions for the three cases in \eqref{eq:sigma_conditional} are bounded above as
\begin{equation}
\label{eq:sigma_conditional_UB_0}
\sigma^{2}_{Y_{i_{j}}^{\mathrm{G}}  | S_{i_{j}}^{\mathrm{G}}}
\leq
\begin{cases}
1 + P_{i_{j+1}}^{[1]}\big|h_{i_{j+1}i_{j}}^{[1]}\big|^{2} + \frac{P_{i_{j}}^{[1]}\big|h_{i_{j}i_{j}}^{[1]}\big|^{2} }
{P_{i_{j}}^{[1]}\big|h_{i_{j}i_{j-1}}^{[1]}\big|^{2} }, \ i_{j}\in \mathcal{S}_{1} \\
1 + P_{i_{j+1}}^{[2]}\big|h_{i_{j+1}i_{j}}^{[2]}\big|^{2} + P_{i_{j}}^{[1]}\big|h_{i_{j}i_{j}}^{[1]}\big|^{2}+
 \frac{P_{i_{j}}^{[2]}\big|h_{i_{j}i_{j}}^{[2]}\big|^{2} }
{P_{i_{j}}^{[2]}\big|h_{i_{j}i_{j-1}}^{[2]}\big|^{2} }, \
i_{j}\in \mathcal{S}_{2} \\
1 + P_{i_{j+1}}^{[1]}\big|h_{i_{j+1}i_{j}}^{[1]}\big|^{2} +
P_{i_{j+1}}^{[2]}\big|h_{i_{j+1}i_{j}}^{[2]}\big|^{2}  +
 \frac{2 P_{i_{j}}^{[2]}\big|h_{i_{j}i_{j}}^{[2]}\big|^{2} }
{ P_{i_{j}}^{[2]} \big| h_{i_{j}i_{j-1}}^{[2]} \big|^{2}  }, \
i_{j}\in \mathcal{S}_{3}
\end{cases}
\end{equation}
where we have employed \eqref{eq:U_ij_t} and the order of strengths \eqref{eq:strength_order}.
Converting to the notation of \eqref{eq:system model 2}
and employing the TIN conditions in \eqref{eq:TIN_condition_1} and \eqref{eq:TIN_condition_2},
we obtain a further upper bound for \eqref{eq:sigma_conditional_UB_0} as
\begin{equation}
\label{eq:sigma_conditional_UB}
\sigma^{2}_{Y_{i_{j}}^{\mathrm{G}}  | S_{i_{j}}^{\mathrm{G}}}
\leq
1 + (l_{i_{j+1}} + l_{i_{j}})P^{\alpha_{i_{j}i_{j}}^{[l_{j}]} -  \alpha_{i_{j}i_{j-1}}^{[l_{j}]}}.
\end{equation}
By combining \eqref{eq:sum_rate_bound} with \eqref{eq:difference_h_S_U}, \eqref{eq:conditional_entropy_Gaussian_2} and \eqref{eq:sigma_conditional_UB},
we obtain the bound in \eqref{eq:capacity_outer_2}.
\end{proof}
\subsection{Constant Gap to Capacity Region}
Utilizing Theorem \ref{theorem:TIN_region}, Theorem \ref{theorem:TIN_optimality} and Theorem \ref{theorem:outer_bound}, it can be
shown that the proposed TIN scheme can achieve the whole capacity region to within a constant gap at any finite SNR.
\begin{theorem}
For the IMAC with input-output relationship in \eqref{eq:system model 2}, if the TIN-optimality conditions in \eqref{eq:TIN_condition_1}
and \eqref{eq:TIN_condition_2} hold, then the rate region achieved through
the TIN scheme with decoding order $\bm{\pi} = \bm{\mathrm{id}}$, as described in Section \ref{subsec:TIN_scheme}, is within $2 + \log(5K)$ bits of the capacity region $\mathcal{C}$.
\end{theorem}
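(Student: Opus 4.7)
The plan is to sandwich the capacity region between the TIN inner bound \eqref{eq:rate per user_1}--\eqref{eq:rate per user_2} and the outer bound of Theorem \ref{theorem:outer_bound}, passing through the GDoF polyhedron $\mathcal{P}_{\bm{\mathrm{id}}}$ of Theorem \ref{theorem:TIN_region}. Both gaps turn out to be $O(\log K)$ and, summed, yield the promised $2+\log(5K)=1+\log(10K)$.

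On the outer side, using the elementary bound $\log(1+cx)\leq\log(c+1)+\log x$ (valid whenever $x\geq 1$), I would loosen \eqref{eq:capacity_outer_1} to $\sum_{s_{i}\in\langle l_{i}\rangle}R_{i}^{[s_{i}]}\leq\alpha_{ii}^{[l_{i}]}\log P+\log(l_{i}+1)$ and \eqref{eq:capacity_outer_2} to
\[
\sum_{j\in\langle m\rangle}\sum_{s_{i_{j}}\in\langle l_{i_{j}}\rangle}R_{i_{j}}^{[s_{i_{j}}]}\leq\Bigl(\sum_{j\in\langle m\rangle}\alpha_{i_{j}i_{j}}^{[l_{i_{j}}]}-\alpha_{i_{j}i_{j-1}}^{[l_{i_{j}}]}\Bigr)\log P+m(1+\log 5),
\]
the positive-part operator on the exponents being unnecessary because condition \eqref{eq:TIN_condition_1}, applied with $i=i_{j}$ and $j=i_{j-1}$, forces $\alpha_{i_{j}i_{j}}^{[l_{i_{j}}]}\geq\alpha_{i_{j}i_{j-1}}^{[l_{i_{j}}]}$. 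Set $c_{0}=1+\log 5$ and, for any $\mathbf{R}\in\mathcal{C}$, define $d_{k}^{[l_{k}]}\triangleq\max\{0,(R_{k}^{[l_{k}]}-c_{0})/\log P\}$. Since $l_{i}c_{0}\geq\log(l_{i}+1)$ and $\sum_{j}l_{i_{j}}c_{0}\geq m(1+\log 5)$, the tuple $\mathbf{d}$ satisfies every inequality of Theorem \ref{theorem:TIN_region} with $\bm{\pi}=\bm{\mathrm{id}}$, so $\mathbf{d}\in\mathcal{P}_{\bm{\mathrm{id}}}$.

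On the inner side, Theorem \ref{theorem:TIN_region} (whose proof constructs the power allocation via the potential graph) delivers $\mathbf{r}\leq\mathbf{0}$ satisfying \eqref{eq:polyhedral_region_3}--\eqref{eq:polyhedral_region_4} for this $\mathbf{d}$. Injecting $\mathbf{r}$ into \eqref{eq:rate per user_1}--\eqref{eq:rate per user_2}, I would bound the interference-plus-noise denominator in \eqref{eq:rate per user_1} by $(2K-1)P^{(I_{k})^{+}}$ with $I_{k}=\max_{j\neq k,\,l_{j}}(r_{j}^{[l_{j}]}+\alpha_{jk}^{[l_{j}]})$, and the one in \eqref{eq:rate per user_2} by $2K\,P^{(J_{k})^{+}}$, the extra factor arising from the intra-cell term. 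Using $\log(1+\mathrm{SINR})\geq\log(\mathrm{SINR})$ together with the polyhedral inequalities then yields
\[
R_{k}^{[1]}(\mathrm{TIN})\geq d_{k}^{[1]}\log P-\log(2K-1),\qquad R_{k}^{[2]}(\mathrm{TIN})\geq d_{k}^{[2]}\log P-\log(2K),
\]
so the per-coordinate gap to $\mathbf{R}$ is at most $c_{0}+\log(2K)=1+\log 5+\log(2K)=2+\log(5K)$.

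The delicate part is the bookkeeping: one must check that the single constant $c_{0}=1+\log 5$ simultaneously absorbs the slack in \emph{all} outer-bound inequalities (single-cell and cyclic of every length $m\in\langle 2:K\rangle$), and that truncating at zero in the definition of $d_{k}^{[l_{k}]}$ does not violate any cyclic inequality of $\mathcal{P}_{\bm{\mathrm{id}}}$; the latter is immediate because those inequalities are monotone non-decreasing in each $d_{k}^{[l_{k}]}$. In the edge case $R_{k}^{[l_{k}]}<c_{0}$ the target is trivial, since TIN always satisfies $R_{k}^{[l_{k}]}(\mathrm{TIN})\geq 0$ and $c_{0}<2+\log(5K)$.
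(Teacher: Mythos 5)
Your overall architecture matches the paper's: loosen the outer bound of Theorem \ref{theorem:outer_bound} by $\log 10$ per cycle element, lower-bound the TIN rates by $d\log P-\log(2K)$ using the polyhedral power-allocation inequalities, and add the two slacks to get $\log(20K)=2+\log(5K)$. The inner half of your argument is sound (and arguably cleaner than the paper's: bounding each of \eqref{eq:rate per user_1}--\eqref{eq:rate per user_2} separately by $d_k^{[l_k]}\log P-\log(2K)$ needs only the inequalities \eqref{eq:polyhedral_region_3}--\eqref{eq:polyhedral_region_4}, whereas the paper works with the per-cell sum rate and invokes a fixed-point argument to get equality in \eqref{eq:rate_gap_power_1}--\eqref{eq:rate_gap_power_1_2}).

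The gap is in the step ``$\mathbf{d}\in\mathcal{P}_{\bm{\mathrm{id}}}$''. Two problems. First, your justification for the truncation is backwards: the left-hand sides of \eqref{eq:IMAC_TIN_region_2}--\eqref{eq:IMAC_TIN_region_3} are non-decreasing in each $d_k^{[l_k]}$, so replacing a negative value $(R_k^{[l_k]}-c_0)/\log P$ by $0$ \emph{increases} the left-hand side and can only make these inequalities harder, not easier, to satisfy. Second, your counting $\sum_j l_{i_j}c_0\geq m c_0$ implicitly assumes that every participating user has $R_{i_j}^{[s_{i_j}]}>c_0$; if only $|A|<m$ of them do, the slack you recover by subtracting $c_0$ is $|A|c_0$, which does not absorb the $mc_0$ of outer-bound slack, and the cyclic inequality can a priori fail by $(m-|A|)c_0/\log P$. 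This is repairable---when a user is truncated to zero you should pass to the sub-constraint obtained by deleting it (reducing $l_{i_j}$ from $2$ to $1$, or shortening the cycle) and verify, using \eqref{eq:TIN_condition_1}--\eqref{eq:TIN_condition_2}, that the sub-constraint's right-hand side does not exceed that of the original constraint---but that verification is a genuine step you have omitted. The paper sidesteps the issue entirely by never mapping $\mathbf{R}$ to a GDoF tuple: it writes the loosened outer region \eqref{eq:capacity_outer_gap_1}--\eqref{eq:capacity_outer_gap_2} and the achievable region \eqref{eq:capacity_inner_gap_1}--\eqref{eq:capacity_inner_gap_2} as rate polyhedra with identical constraint structure whose right-hand sides differ by at most $\log(20K)$ per participating dimension, from which the constant gap follows directly.
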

\begin{proof}
The above result is shown by following the same steps used to prove \cite[Theorem 4]{Geng2015}.
First, we obtain an outer bound which is within a constant gap from the one in Theorem \ref{theorem:outer_bound}.
For the bound in \eqref{eq:capacity_outer_1}, since  $P > 1$ and $l_{i} \leq 2$, we have
\begin{align}
\nonumber
\sum_{s_{i}\in \langle l_{i} \rangle}R_{i}^{[s_{i}]} & \leq \log\big(1 + l_{i}P^{\alpha_{ii}^{[l_{i}]}} \big)\\
\nonumber
& \leq \log \big(3 P^{\alpha_{ii}^{[l_{i}]}} \big) = \log(3 ) +\log\big( P^{\alpha_{ii}^{[l_{i}]}} \big).
\end{align}
On the other hand, for the bound in \eqref{eq:capacity_outer_2}, it follows that
\begin{align}
\nonumber
\sum_{j \in \langle m \rangle } \sum_{s_{i_{j}} \in \langle l_{i_{j}} \rangle} R_{i_{j}}^{[s_{i_{j}}]} & \leq
\sum_{j \in \langle m \rangle } \Big[ 1 +  \log\Big( 1 +
(l_{i_{j}+1} + l_{i_{j}}) P^{\alpha_{i_{j}i_{j}}^{[l_{i_{j}}]}-\alpha_{i_{j}i_{j-1}}^{[l_{i_{j}}]}}  \Big)  \Big]\\
\nonumber
& \leq \sum_{j \in \langle m \rangle } \Big[ 1 + \log(5) +  \log\Big( P^{\alpha_{i_{j}i_{j}}^{[l_{i_{j}}]}-\alpha_{i_{j}i_{j-1}}^{[l_{i_{j}}]}}  \Big)  \Big].
\end{align}
From the above, we see that $\mathcal{C}$ is contained in the region given by all rate tuples $\mathbf{R} \in \mathbb{R}_{+}$ such that
\begin{align}
\label{eq:capacity_outer_gap_1}
\sum_{s_{i}\in \langle l_{i} \rangle}R_{i}^{[s_{i}]} & \leq \alpha_{ii}^{[l_{i}]}\log(P ) +  \log(10),
\ (l_{i},i) \in \mathcal{K}\\
\nonumber
\sum_{j \in \langle m \rangle } \sum_{s_{i_{j}} \in \langle l_{i_{j}} \rangle} R_{i_{j}}^{[s_{i_{j}}]} & \leq
\sum_{j \in \langle m \rangle } \Big[ (\alpha_{i_{j}i_{j}}^{[l_{i_{j}}]}-\alpha_{i_{j}i_{j-1}}^{[l_{i_{j}}]})  \log ( P ) +  \log(10) \Big],
\\
\label{eq:capacity_outer_gap_2}
\forall l_{i_{j}} \in \{1,2\}, \ & (i_{1},\ldots,i_{m}) \in \Sigma\big(\langle K \rangle\big), m \in \langle 2:K \rangle.
\end{align}

Next, we derive an achievable rate region.
We fix the decoding order to $\bm{\pi} = \bm{\mathrm{id}}$.
From \eqref{eq:rate per user_1} and  \eqref{eq:rate per user_2},
we know that for any feasible power allocation $\mathbf{r}$, the rate tuple $\bar{\mathbf{R}} \in \mathbb{R}_{+}$
that satisfies
\begin{align}
\label{eq:rate_gap_1}
\bar{R}_{k}^{[1]} & = \log \Biggl( 1+ \frac{ P^{ r_{k}^{[1]} +  \alpha_{kk}^{[1]} }  }
{1  +
\sum_{j\neq k} \big[ P^{ r_{j}^{[1]} + \alpha_{jk}^{[1]} }+ P^{ r_{j}^{[2]} + \alpha_{jk}^{[2]} }  \big]  } \Biggr)  \\
\label{eq:rate_gap_2}
\bar{R}_{k}^{[1]}  + \bar{R}_{k}^{[2]} & =\log \Biggl( 1+ \frac{ P^{ r_{k}^{[1]} + \alpha_{kk}^{[1]} } +  P^{ r_{k}^{[2]} + \alpha_{kk}^{[2]}  }  }
{1    +
\sum_{j\neq k} \big[ P^{ r_{j}^{[1]} + \alpha_{jk}^{[1]} }+ P^{ r_{j}^{[2]} + \alpha_{jk}^{[2]} }  \big]  } \Biggr).
\end{align}
for all $k \in \langle K \rangle$, is achievable.
The region given by all such rate tuples, corresponding to all feasible $\mathbf{r}$, is hence achievable.
Next, we characterize this rate region to within a constant gap.
From the proof of Theorem \ref{theorem:TIN_region} (see Section \ref{sec:TIN_region}) and Theorem \ref{theorem:TIN_optimality}, we know that when the conditions in \eqref{eq:TIN_condition_1} and \eqref{eq:TIN_condition_2} hold,
$\mathbf{d} \in \mathcal{P}_{\bm{\mathrm{id}}}$ if and only if there exists a power allocation $\mathbf{r}$
such that
\begin{align}
\label{eq:rate_gap_power_1}
d_{k}^{[1]}  & = r_{k}^{[1]} + \alpha_{kk}^{[1]}
 -  \max \Bigl\{ 0  , \max_{j \neq k} \bigl\{  \max_{l_{j}} \{ r_{j}^{[l_{j}]} + \alpha_{jk}^{[l_{j}]} \}  \bigr\} \Bigr\},
 \ k \in \langle K \rangle   \\
\label{eq:rate_gap_power_1_2}
d_{k}^{[2]} & =  r_{k}^{[2]}  +   \alpha_{kk}^{[2]}
  -  \max \Bigl\{ 0  , r_{k}^{[1]}  +   \alpha_{kk}^{[1]} ,\max_{j \neq k} \bigl\{
  \max_{l_{j}} \{ r_{j}^{[l_{j}]}  +  \alpha_{jk}^{[l_{j}]} \}  \bigr\} \Bigr\},  \ k \in \langle K \rangle  \\
\label{eq:rate_gap_power_3}
r_{k}^{[l_{k}]} & \leq 0 , \   (l_{k},k) \in \mathcal{K}
\end{align}
are satisfied\footnote{Note that while the conditions for feasible power allocation in \eqref{eq:GDoF_polyhedral_1} and \eqref{eq:GDoF_polyhedral_2} (and hence \eqref{eq:polyhedral_region_21}--\eqref{eq:polyhedral_region_25}) are given in terms of inequalities, equality in \eqref{eq:rate_gap_power_1} and \eqref{eq:rate_gap_power_2} can be shown using the fixed point theorem as in \cite[Appendix B]{Geng2015}.}.
By adding \eqref{eq:rate_gap_power_1} and \eqref{eq:rate_gap_power_1_2}, we obtain
\begin{align}
\nonumber
d_{k}^{[1]} + d_{k}^{[2]} & =  r_{k}^{[1]} + \alpha_{kk}^{[1]} + r_{k}^{[2]}  +   \alpha_{kk}^{[2]}
 -  \max \Bigl\{ 0  , \max_{j \neq k} \bigl\{  \max_{l_{j}} \{ r_{j}^{[l_{j}]} + \alpha_{jk}^{[l_{j}]} \}  \bigr\} \Bigr\} \\
\nonumber
& \quad \quad -  \max \Bigl\{ 0  , r_{k}^{[1]}  +   \alpha_{kk}^{[1]} ,\max_{j \neq k} \bigl\{
  \max_{l_{j}} \{ r_{j}^{[l_{j}]}  +  \alpha_{jk}^{[l_{j}]} \}  \bigr\} \Bigr\}  \\
\nonumber
  & \leq  r_{k}^{[2]}  +   \alpha_{kk}^{[2]}
 -  \max \Bigl\{ 0  , \max_{j \neq k} \bigl\{  \max_{l_{j}} \{ r_{j}^{[l_{j}]} + \alpha_{jk}^{[l_{j}]} \}  \bigr\} \Bigr\} \\
\label{eq:rate_gap_power_2}
& \leq \max \bigl\{r_{k}^{[1]}  +   \alpha_{kk}^{[1]}, r_{k}^{[2]}  +   \alpha_{kk}^{[2]} \bigr\}
  -  \max \Bigl\{ 0  ,\max_{j \neq k} \bigl\{
  \max_{l_{j}} \{ r_{j}^{[l_{j}]}  +  \alpha_{jk}^{[l_{j}]} \}  \bigr\} \Bigr\}.
\end{align}
We employ the above to characterize the achievable rate region.
In particular, from \eqref{eq:rate_gap_power_1}, the achievable rate in \eqref{eq:rate_gap_1} is bounded below as
\begin{align}
\nonumber
\bar{R}_{k}^{[1]} & \geq \log \Biggl(  \frac{ P^{ r_{k}^{[1]} +  \alpha_{kk}^{[1]} }  }
{P^{0}  +
\sum_{j\neq k} \big[ P^{ r_{j}^{[1]} + \alpha_{jk}^{[1]} }+ P^{ r_{j}^{[2]} + \alpha_{jk}^{[2]} }  \big]  } \Biggr)  \\
\nonumber
& \geq  \log \Biggl(  \frac{ P^{ r_{k}^{[1]} +  \alpha_{kk}^{[1]} }  }
{ [1+ 2(K-1)] P^{ \max \bigl\{ 0  , \max_{j \neq k} \{  \max_{l_{j}} \{ r_{j}^{[l_{j}]} + \alpha_{jk}^{[l_{j}]} \}  \} \bigr\} } } \Biggr) \\
\nonumber
& \geq  \log \Biggl(  \frac{ P^{ r_{k}^{[1]} +  \alpha_{kk}^{[1]} }  }
{ [1+ 2(K-1)] P^{ r_{k}^{[1]} + \alpha_{kk}^{[1]} - d_{k}^{[1]} } } \Biggr) \\
\nonumber
& \geq  d_{k}^{[1]} \log(P) - \log(2K).
\end{align}
Similarly, from \eqref{eq:rate_gap_power_2}, the sum rate in \eqref{eq:rate_gap_2} is bounded below as
\begin{align}
\nonumber
\bar{R}_{k}^{[1]}  + \bar{R}_{k}^{[2]} & \geq \log \Biggl( \frac{ P^{ r_{k}^{[1]} + \alpha_{kk}^{[1]} } +  P^{ r_{k}^{[2]} + \alpha_{kk}^{[2]}  }  }
{P^{0}    +
\sum_{j\neq k} \big[ P^{ r_{j}^{[1]} + \alpha_{jk}^{[1]} }+ P^{ r_{j}^{[2]} + \alpha_{jk}^{[2]} }  \big]  } \Biggr) \\
\nonumber
& \geq  \log \Biggl(  \frac{ P^{ \max \bigl\{r_{k}^{[1]}  +   \alpha_{kk}^{[1]}, r_{k}^{[2]}  +   \alpha_{kk}^{[2]} \bigr\} }  }
{ [1+ 2(K-1)] P^{ \max \bigl\{ 0  , \max_{j \neq k} \{  \max_{l_{j}} \{ r_{j}^{[l_{j}]} + \alpha_{jk}^{[l_{j}]} \}  \} \bigr\} } } \Biggr) \\
\nonumber
& \geq  \log \Biggl(  \frac{ P^{ \max \bigl\{r_{k}^{[1]}  +   \alpha_{kk}^{[1]}, r_{k}^{[2]}  +   \alpha_{kk}^{[2]} \bigr\} }  }
{ [1+ 2(K-1)] P^{ \max \bigl\{r_{k}^{[1]}  +   \alpha_{kk}^{[1]}, r_{k}^{[2]}  +   \alpha_{kk}^{[2]} \bigr\} -
\big(d_{k}^{[1]} + d_{k}^{[2]} \big) } } \Biggr) \\
\nonumber
& \geq  \big(d_{k}^{[1]} + d_{k}^{[2]} \big) \log(P) - \log(2K).
\end{align}
From the above, and since $\mathbf{d} \in \mathcal{P}_{\bm{\mathrm{id}}}$,
the achievable rate region, as specified through \eqref{eq:rate_gap_1} and \eqref{eq:rate_gap_2}, contains the region given by
all rate tuples $\bar{\mathbf{R}} \in \mathbb{R}_{+}$ that satisfy
\begin{align}
\label{eq:capacity_inner_gap_1}
\sum_{s_{i}\in \langle l_{i} \rangle}\bar{R}_{i}^{[s_{i}]} & \leq
\max\left\{ 0, \alpha_{ii}^{[l_{i}]}\log(P ) - \log(2K) \right\},
\ (l_{i},i) \in \mathcal{K}\\
\nonumber
\sum_{j \in \langle m \rangle } \sum_{s_{i_{j}} \in \langle l_{i_{j}} \rangle} \bar{R}_{i_{j}}^{[s_{i_{j}}]} & \leq
\max\left\{ 0 ,
\sum_{j \in \langle m \rangle } \Big[ (\alpha_{i_{j}i_{j}}^{[l_{i_{j}}]}-\alpha_{i_{j}i_{j-1}}^{[l_{i_{j}}]})  \log ( P ) - \log(2K) \Big]
\right\}, \\
\label{eq:capacity_inner_gap_2}
\forall l_{i_{j}} \in \{1,2\}, \ & (i_{1},\ldots,i_{m}) \in \Sigma\big(\langle K \rangle\big), m \in \langle 2:K \rangle.
\end{align}
At this point, it can be easily shown that each of the rate constraints
in  \eqref{eq:capacity_inner_gap_1} and \eqref{eq:capacity_inner_gap_2} is within at most
$\log(20K)$ bits (per dimension) of its corresponding outer bound in \eqref{eq:capacity_outer_gap_1} and \eqref{eq:capacity_outer_gap_2} (e.g. see the proof of \cite[Theorem 4]{Geng2015}).
This completes the proof of the theorem.
\end{proof}
\section{Concluding Remarks}
In this paper, we considered the TIN optimality problem for the Gaussian IMAC.
We derived a TIN-achievable GDoF region through a novel application of the potential theorem approach in \cite{Geng2015,Geng2016}.
Moveover, we proved the optimality of this GDoF region for a non-trivial regime of parameters by building
upon the genie-aided converse arguments in \cite{Etkin2008}, \cite{Geng2015} and \cite{Gherekhloo2016}.
An interesting extension  is to consider the more general scenario where each
MAC consists of an arbitrary number of users.
\bibliographystyle{IEEEtran}
\bibliography{References}

\begin{thebibliography}{10}
\providecommand{\url}[1]{#1}
\csname url@samestyle\endcsname
\providecommand{\newblock}{\relax}
\providecommand{\bibinfo}[2]{#2}
\providecommand{\BIBentrySTDinterwordspacing}{\spaceskip=0pt\relax}
\providecommand{\BIBentryALTinterwordstretchfactor}{4}
\providecommand{\BIBentryALTinterwordspacing}{\spaceskip=\fontdimen2\font plus
\BIBentryALTinterwordstretchfactor\fontdimen3\font minus
  \fontdimen4\font\relax}
\providecommand{\BIBforeignlanguage}[2]{{%
\expandafter\ifx\csname l@#1\endcsname\relax
\typeout{** WARNING: IEEEtran.bst: No hyphenation pattern has been}%
\typeout{** loaded for the language `#1'. Using the pattern for}%
\typeout{** the default language instead.}%
\else
\language=\csname l@#1\endcsname
\fi
#2}}
\providecommand{\BIBdecl}{\relax}
\BIBdecl

\bibitem{Annapureddy2009}
V.~S. Annapureddy and V.~V. Veeravalli, ``Gaussian interference networks: Sum
  capacity in the low-interference regime and new outer bounds on the capacity
  region,'' \emph{IEEE Trans. Inf. Theory}, vol.~55, no.~7, pp. 3032--3050,
  Jul. 2009.

\bibitem{Shang2009}
X.~Shang, G.~Kramer, and B.~Chen, ``A new outer bound and the
  noisy-interference sum rate capacity for {Gaussian} interference channels,''
  \emph{IEEE Trans. Inf. Theory}, vol.~55, no.~2, pp. 689--699, Feb. 2009.

\bibitem{Motahari2009}
A.~S. Motahari and A.~K. Khandani, ``Capacity bounds for the {Gaussian}
  interference channel,'' \emph{IEEE Trans. Inf. Theory}, vol.~55, no.~2, pp.
  620--643, Feb. 2009.

\bibitem{Charafeddine2012}
M.~A. Charafeddine, A.~Sezgin, Z.~Han, and A.~Paulraj, ``Achievable and
  crystallized rate regions of the interference channel with interference as
  noise,'' \emph{IEEE Trans. Wireless Commun.}, vol.~11, no.~3, pp. 1100--1111,
  Mar. 2012.

\bibitem{Luo2008}
Z.~Q. Luo and S.~Zhang, ``Dynamic spectrum management: Complexity and
  duality,'' \emph{IEEE J. Sel. Topics Signal Process.}, vol.~2, no.~1, pp.
  57--73, Feb. 2008.

\bibitem{Geng2015}
C.~Geng, N.~Naderializadeh, A.~S. Avestimehr, and S.~A. Jafar, ``On the
  optimality of treating interference as noise,'' \emph{IEEE Trans. Inf.
  Theory}, vol.~61, no.~4, pp. 1753--1767, Apr. 2015.

\bibitem{Etkin2008}
R.~H. Etkin, D.~N.~C. Tse, and H.~Wang, ``Gaussian interference channel
  capacity to within one bit,'' \emph{IEEE Trans. Inf. Theory}, vol.~54,
  no.~12, pp. 5534--5562, Dec. 2008.

\bibitem{Geng2015a}
C.~Geng, H.~Sun, and S.~A. Jafar, ``On the optimality of treating interference
  as noise: General message sets,'' \emph{IEEE Trans. Inf. Theory}, vol.~61,
  no.~7, pp. 3722--3736, Jul. 2015.

\bibitem{Geng2016}
C.~Geng and S.~A. Jafar, ``On the optimality of treating interference as noise:
  Compound interference networks,'' \emph{IEEE Trans. Inf. Theory}, vol.~62,
  no.~8, pp. 4630--4653, Aug. 2016.

\bibitem{Sun2016}
H.~Sun and S.~A. Jafar, ``On the optimality of treating interference as noise
  for {$K$}-user parallel {Gaussian} interference networks,'' \emph{IEEE Trans.
  Inf. Theory}, vol.~62, no.~4, pp. 1911--1930, Apr. 2016.

\bibitem{Yi2016}
X.~Yi and G.~Caire, ``Optimality of treating interference as noise: A
  combinatorial perspective,'' \emph{IEEE Trans. Inf. Theory}, vol.~62, no.~8,
  pp. 4654--4673, Aug. 2016.

\bibitem{Gherekhloo2016}
S.~Gherekhloo, A.~Chaaban, C.~Di, and A.~Sezgin, ``(sub-)optimality of treating
  interference as noise in the cellular uplink with weak interference,''
  \emph{IEEE Trans. Inf. Theory}, vol.~62, no.~1, pp. 322--356, Jan. 2016.

\bibitem{Cover2012}
T.~M. Cover and J.~A. Thomas, \emph{Elements of information theory}.\hskip 1em
  plus 0.5em minus 0.4em\relax John Wiley \& Sons, 2012.

\bibitem{Schrijver2002}
A.~Schrijver, \emph{Combinatorial optimization: Polyhedra and
  efficiency}.\hskip 1em plus 0.5em minus 0.4em\relax Springer, 2003, vol.~24.

\end{thebibliography}
\end{document}